\let\doendproof\endproof
\renewcommand\endproof{\hfill$\qed$\doendproof}
\newcommand{\dirN}{\textrm{\textsc{N}}}
\newcommand{\dirE}{\textrm{\textsc{E}}}
\newcommand{\dirW}{\textrm{\textsc{W}}}
\begin{document}

\title{Size-Separable Tile Self-Assembly:\\A Tight Bound for Temperature-1 Mismatch-Free Systems}

\titlerunning{Size-Separable Tile Self-Assembly}

\author{Andrew Winslow}

\authorrunning{Andrew Winslow}
\tocauthor{Andrew Winslow}

\institute{
Tufts University, Medford, MA 02155, USA.\\
\email{awinslow@cs.tufts.edu}
}

\maketitle

\begin{abstract}
We introduce a new property of tile self-assembly systems that we call \emph{size-separability}.
A system is size-separable if every terminal assembly is a constant factor larger than any intermediate assembly.
Size-separability is motivated by the practical problem of filtering completed assemblies from a variety of incomplete ``garbage'' assemblies using gel electrophoresis or other mass-based filtering techniques.

Here we prove that any system without cooperative bonding assembling a unique mismatch-free terminal assembly can be used to construct a size-separable system uniquely assembling the same shape.
The proof achieves optimal scale factor and temperature for the size-separable system.
As part of the proof, we obtain two results of independent interest on mismatch-free temperature-1 two-handed systems.
\keywords{2HAM, hierarchical, aTAM, glues, gel electrophoresis}
\end{abstract}

\section{Introduction}

The study of theoretical tile self-assembly was initiated by the Ph.D. thesis of Erik Winfree~\cite{Winfree-1998a}.
He proved that systems of passive square particles (called \emph{tiles}) that attach according to matching bonds (called \emph{glues}) are capable of universal computation and efficient assembly of shapes such as squares.
Soloveichik and Winfree~\cite{Soloveichik-2007a} later proved that these systems are capable of efficient assembly of any shape, allowing for an arbitrary scaling of the shape, used to embed a roving Turing machine.
In this original \emph{abstract Tile Assembly Model (aTAM)}, tiles attach singly to a growing seed assembly.

An alternative model, called the \emph{two-handed assembly model (2HAM)}~\cite{Abel-2010a,Cannon-2013a,Demaine-2008b,Doty-2010b}, \emph{hierarchical tile assembly model}~\cite{Chen-2012a,Padilla-2013a}, or \emph{polyomino tile assembly model}~\cite{Luhrs-2009a,Luhrs-2010a}, allows ``seedless'' assembly, where tiles can attach spontaneously to form large assemblies that may attach to each other.
This seedless assembly was proved by Cannon et al.~\cite{Cannon-2013a} to be capable of simulating any seeded assembly process, while also achieving more efficient assembly of some classes of shapes.

A generalization of the 2HAM called the \emph{staged tile assembly model} introduced by Demaine et al.~\cite{Demaine-2008b} utilizes sequences of \emph{mixings}, where each mixing combines a set of \emph{input assemblies} using a 2HAM assembly process.
The products of the mixing are the \emph{terminal assemblies} that cannot combine with any other assembly produced during the assembly process (called a \emph{producible assembly}).
This set of terminal assemblies can then be used as input assemblies in another mixing, combined with the sets of terminal assemblies from other mixings.

After a presentation by the author of work~\cite{Winslow-2013a} on the staged self-assembly model at DNA 19, Erik Winfree commented that the staged tile assembly model has a unrealistic assumption: at the end of each mixing process, all producible but non-terminal assemblies are removed from the mixing.
A similar assumption is made in the 2HAM model, where only the terminal assemblies are considered to be ``produced'' by the system.

Ignoring large producible assemblies is done to simplify the model definition, but allows unrealistic scenarios where ``nearly terminal'' systems differing from some terminal assembly by a small number of tiles are presumed to be eliminated or otherwise removed at the end of the assembly process.
While filtering techniques, including well-known gel electrophoresis, may be employed to obtain filtering of particles at the nanoscale, such techniques generally lack the resolution to distinguish between macromolecules that differ in size by only a small amount. 

\textbf{Our results.} In this work, we consider efficient assembly of shapes in the 2HAM model under the restriction that terminal assemblies are significantly larger than all non-terminal producible assemblies.
We call a system \emph{factor-$c$ size-separable} if the ratio between the smallest terminal assembly and largest non-terminal producible assembly is at least~$c$.
Thus, high-factor size-separable systems lack large but non-terminal assemblies, allowing robust filtering of terminal from non-terminal assemblies in these systems.

Our main result is an algorithm for converting 2HAM systems of a special class into size-separable 2HAM systems.
A 2HAM system $\mathcal{S} = (T, f, \tau)$ consists of a set of \emph{tiles} $T$ that attach by forming bonds according to their \emph{glues} and a \emph{glue-strength function} $f$, and two assemblies can attach if the total strength of the bonds formed meets or exceeds the temperature $\tau$ of the system. 
If a system is temperature-1 ($\tau = 1$), then any two assemblies can attach if they have a single matching glue.
An assembly is said to be \emph{mismatch-free} if no two coincident tile sides in the assembly or any assembly in the system have different glues.
We prove the following:

\vspace*{10pt}

\textbf{Theorem~\ref{thm:mainresult}.}
\emph{Let $\mathcal{S} = (T, f, 1)$ be a 2HAM system with a mismatch-free unique terminal assembly $A$.
Then there exists a factor-2 size-separable 2HAM system $\mathcal{S}' = (T', f', 2)$ with a unique mismatch-free finite terminal assembly $A'$ such that $|\mathcal{S}'| \leq 8|\mathcal{S}|$ and $A'$ has the shape of $A$ scaled by a factor of~2.}

\vspace*{10pt}

Along the way, we prove two results of independent interest on temperature-1 mismatch-free systems.
The \emph{bond graph} of an assembly $A$, denoted $G(A)$, is the dual graph of $A$ formed by a node for each tile, and an edge between two tiles if they form a bond.
We show that any system with a unique mismatch-free finite terminal assembly whose bond graph is not a tree can be made so without increasing the number of tile types in the system:

\vspace*{10pt}

\textbf{Lemma~\ref{lem:treeification} (Tree-ification Lemma).}
\emph{Let $\mathcal{S} = (T, f, 1)$ be a 2HAM system with unique mismatch-free finite terminal assembly $A$.
Then there exists a 2HAM system $\mathcal{S}' = (T', f', 1)$ with unique mismatch-free finite terminal assembly $A'$ and $|\mathcal{S}'| \leq |\mathcal{S}|$, where $A'$ has the shape of $A$ and $G(A')$ is a tree.}

\vspace*{10pt}

The proof of the Tree-ification Lemma yields a simple algorithm for obtaining $\mathcal{S}'$: while a cycle in $G(A)$ remains, remove a glue on this cycle from the tile type containing it.
The challenge is in proving such a process does not disconnect $G(A)$, regardless of the glue and cycle chosen.

We also prove that the tile types used only once in a unique terminal assembly, called \emph{1-occurrence tiles}, form a connected subgraph of $G(A)$.
That is, these tiles taken alone form a valid assembly.

\vspace*{10pt}

\textbf{Lemma~\ref{lem:1occurrence-tiles-1stable}.}
\emph{Let $\mathcal{S} = (T, f, 1)$ be a 2HAM system with unique mismatch-free finite terminal assembly $A$.
Then the 1-occurrence tiles in $A$ form a 1-stable subassembly of $A$.}

\vspace*{10pt}

For some questions about temperature-1 systems, results have been far easier to obtain for mismatch-free systems than for general systems allowing mismatches.
For instance, a lower bound of $2n-1$ for the assembly of a $n \times n$ square by any temperature~1 aTAM system was conjectured by Rothemund and Winfree~\cite{Rothemund-2000a}, and proved for mismatch-free systems by Ma\v{n}uch, Stacho, and Stoll~\cite{Manuch-2010a}.
Meunier~\cite{Meunier-2014b} was able to show the same lower bound for systems permitted to have mismatches under the assumption that the seed tile starts in the lower left of the assembly, and removing this restriction remains open.
In a similar vein, Reif and Song~\cite{Reif-2014a} have shown that temperature-1 mismatch-free aTAM systems are not computationally universal, while the same problem for systems with mismatches permitted is a notoriously difficult problem that remains open, despite significant efforts~\cite{Lathrop-2008a,Doty-2009a,Summers-2010a,Meunier-2014a}.

In spite of such results, constructing high-factor size-separable versions of temperature-1 mismatch-free systems remains challenging.
One difficulty lies in the partitioning the assembly into two equal-sized halves that will come together for the final assembly step.
Note that for many assemblies, such a cutting is impossible (e.g.\ the right assembly in Figure~\ref{fig:simple-examples}).
Even if such a cutting is possible, removing the bonds connecting the two halves by modifying the tiles along the boundary may require a large increase in the number of tile types of the system. 

Another challenge lies in coping with cycles in the bond graph.
Factor-2 size-separability requires that the last assembly step consists of two completely assembled halves of the unique terminal assembly attaching.
Cycles in the bond graph (e.g. the left assembly in Figure~\ref{fig:simple-examples}) prevent communication between the tiles inside and outside of the cycles, risking the possibility that the portion of the assembly inside a cycle still has missing tiles as the exterior takes part in the supposed final assembly step.

\begin{figure}[ht]
\centering
\includegraphics[scale=1.0]{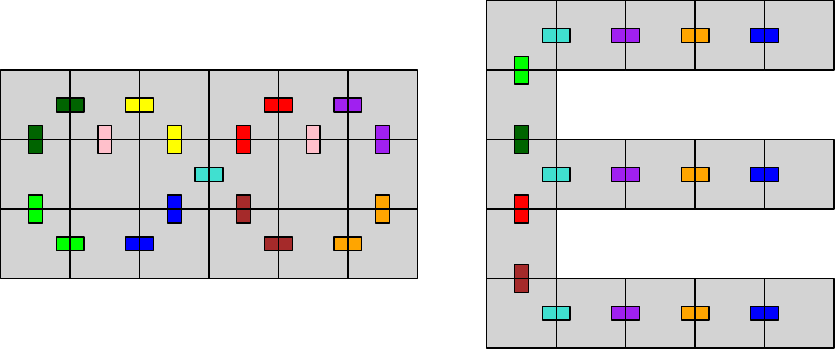}
\caption{Unique mismatch-free terminal assemblies of two different temperature-1 2HAM systems.
Constructing high-factor size-separable versions of these systems is challenging due to the existence of cycles (left) and lack of equal-sized halves (right).}
\label{fig:simple-examples}
\end{figure}

Loosely speaking, our approach is to first construct a version of $A$ where the bond graph is a tree and a vertex cut of $G(A)$ consisting of a path of 1-occurrence tiles exists.
This modified version of $A$ is then scaled in size and temperature by a factor of~2, using special $2 \times 2$ macrotiles that only assemble along the boundary of the scaled assembly via mixed-strength bonds.
Finally, the 1-occurrence tiles forming a vertex cut are given weakened glues such that only completely formed subassemblies on both sides of the cut can attach across the weak-glue cut.

\section{Definitions}
\label{sec:definitions}

Here we give a complete set of formal definitions of tile self-assembly used throughout the paper.
All of the definitions used are equivalent to those found in prior work on the two-handed tile assembly model, e.g.~\cite{Abel-2010a,Cannon-2013a,Chen-2012a,Padilla-2013a}.

\textbf{Assembly systems.} In this work we study the \emph{two-handed tile assembly model (2HAM)}, and instances of the model called \emph{systems}.
A 2HAM system $\mathcal{S} = (T, f, \tau)$ is specified by three parts: 
a \emph{tile set} $T$, a \emph{glue-strength function} $f$, and a \emph{temperature} $\tau \in \mathbb{N}$.

The tile set $T$ is a set of unit square \emph{tiles}.
Each tile $t \in T$ is defined by 4-tuple $t = (g_n, g_e, g_s, g_w)$ consisting of four \emph{glues} from a set $\Sigma$ of \emph{glue types}, i.e. $g_n, g_e, g_s, g_w \in \Sigma$.
The four glues $g_n$, $g_e$, $g_s$, $g_w$ specify the glue types in $\Sigma$ found on the north (N), east (E), south (S), and west (W) sides of $t$, respectively.
Each glue also defines a \emph{glue-side}, e.g. $(g_n, N)$.
Define $g_D(t)$ to be the glue on the side $D$ of $t$, e.g. $g_{\dirN}(t) = g_n$. 

The glue function $f : \Sigma^2 \rightarrow \mathbb{N}$ determines the strength of the \emph{bond} formed by two coincident glue-sides.
For any two glues $g, g' \in \Sigma$, $f(g, g') = f(g', g)$.
A unique \emph{null glue} $\varnothing \in \Sigma$ has the property that $f(\varnothing, g) = 0$ for all $g \in \Sigma$.
In this work we only consider glue functions such that for all $g, g' \in \Sigma$, $f(g, g') = 0$ and if $g \neq \varnothing, f(g, g) > 0$.
For convenience, we sometimes refer to a glue-side with the null glue as a side \emph{without a glue}. 

\textbf{Configurations and assemblies.} A \emph{configuration} is a partial function $C : \mathbb{Z}^2 \rightarrow T$ mapping locations on the integer lattice to tiles.
Define $L_D(x, y)$ to be the location in $\mathbb{Z}^2$ one unit in direction $D$ from $(x, y)$, e.g. $L_{\dirN}(0, 0) = (0, 1)$.
For any pair of locations $(x, y), L_D(x, y) \in C$, the \emph{bond strength} between the these tiles is $f(g_D(C(x, y)), g_{D^{-1}}(C(L_D(x, y))))$.
If $g_D(C(x, y)) \neq g_{D^{-1}}(C(L_D(x, y)))$, then the pair of tiles is said to form a \emph{mismatch}, and a configuration with no mismatches is \emph{mismatch-free}.
If $g_D(C(x, y)) = g_{D^{-1}}(C(L_D(x, y)))$, then the common glue and pair of directions define a \emph{glue-side pair} $(g_D(C(x, y), \{D, D^{-1}\})$.

The \emph{bond graph of $C$}, denoted $G(C)$, is defined as the graph with vertices ${\rm dom}(C)$ and edges $\{ ((x, y), L_D(x, y)) : f(g_D(C(x, y)), g_{D^{-1}}(C(L_D(x, y)))) > 0 \}$.
That is, the graph induced by the neighboring tiles of $C$ forming positive-strength bonds.

A configuration $C$ is a \emph{$\tau$-stable assembly} or an \emph{assembly at temperature $\tau$} if ${\rm dom}(C)$ is connected on the lattice and, for any partition of ${\rm dom}(C)$ into two subconfigurations $C_1$ and $C_2$, the sum of the bond strengths between tiles at pairs of locations $p_1 \in {\rm dom}(C_1)$, $p_2 \in {\rm dom}(C_2)$ is at least $\tau$, the temperature of the system.
Any pair of assemblies $A_1$, $A_2$ are equivalent if they are identical up to a translation by $\langle x, y \rangle$ with $x, y \in \mathbb{Z}$.
The \emph{size} of an assembly $A$ is $|{\rm dom}(A)|$, and $t \in T$ is a \emph{$k$-occurrence tile} in $A$ if $|\{ (x, y) \in {\rm dom}(A) : A(x, y) = t \}| = k$.
The \emph{shape} of an assembly is the polyomino induced by ${\rm dom}(A)$, and a shape is \emph{scaled by a factor $k$} by replacing each cell of the polyomino with a $k \times k$ block of cells. 

Two $\tau$-stable assemblies $A_1$, $A_2$ are said to \emph{assemble} into a \emph{superassembly} $A_3$ if $A_2$ is equivalent to an assembly $A_2'$ such that ${\rm dom}(A_1) \cap {\rm dom}(A_2') = \emptyset$ and $A_3$ defined by the union of the partial functions $A_1$ and $A_2'$ is a $\tau$-stable assembly.
Similarly, an assembly $A_1$ is a \emph{subassembly} of $A_2$, denoted $A_1 \subseteq A_2$, if $A_2$ is equivalent to an assembly $A_2'$ such that ${\rm dom}(A_1) \subseteq {\rm dom}(A_2')$.

\textbf{Producible and terminal assemblies.} An assembly $A$ is a \emph{producible assembly} of a 2HAM system $\mathcal{S}$ if $A$ can be assembled from two other producible assemblies or $A$ is a single tile in $T$.
A producible assembly $A$ is a \emph{terminal assembly} of $\mathcal{S}$ if $A$ is producible and $A$ does not assemble with any other producible assembly of $\mathcal{S}$.

We also consider \emph{seeded} versions of some 2HAM systems, where an assembly is producible if it can be assembled from another producible assembly and a single tile of $T$.
Note that for any temperature-1 2HAM system $\mathcal{S}$, the seeded version of $\mathcal{S}$ has the same set of terminal assemblies as $\mathcal{S}$. 

If $\mathcal{S}$ has a single terminal assembly $A$, we call $A$ the \emph{unique terminal assembly (UTA)} of $\mathcal{S}$.
In the case that $|A|$ is finite and mismatch-free, we further call $A$ the \emph{unique mismatch-free finite terminal assembly (UMFTA)} of $\mathcal{S}$.

\textbf{Size-separability.} A 2HAM system $\mathcal{S} = (T, f, \tau)$ is a \emph{factor-$c$ size-separable} if for any pair of producible assemblies $A$, $B$ of $\mathcal{S}$ with $A$ terminal and $B$ not terminal, $|A|/|B| \geq c$. 
Since this ratio is undefined when $\mathcal{S}$ has infinite producible assemblies, such systems have undefined size-separability.
Every system with defined size-separability has factor-$c$ size-separability for some $1 \leq c \leq 2$. 

\section{Tree-ification}
\label{sec:tree-ification}

First, we prove that any $\tau=1$ system producing a unique terminal assembly can be converted into a system with another unique terminal assembly with the same shape but whose bond graph is a tree.
This is formalized in the Tree-ification Lemma (Lemma~\ref{lem:treeification}) at the end of this section.

\begin{lemma}
\label{lem:1-stable-is-producible}
Let $\mathcal{S} = (T, f, 1)$ be a 2HAM system.
Every 1-stable assembly consisting of tiles in $T$ is a producible assembly of $\mathcal{S}$.
\end{lemma}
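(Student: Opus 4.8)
The plan is to build the given 1-stable assembly one tile at a time, in an order dictated by a spanning tree of its bond graph, and to argue by induction that each partial assembly is producible.

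First I would record the key observation that, at temperature~$1$, a configuration $A$ is 1-stable exactly when its bond graph $G(A)$ is connected. Indeed, the glue function takes values in $\mathbb{N}$, so every positive-strength bond has strength at least~$1$; hence for any partition of ${\rm dom}(A)$ the crossing bond strength is at least~$1$ if and only if at least one positive-strength bond crosses the partition, i.e.\ if and only if $G(A)$ has an edge across it. Thus $A$ is 1-stable precisely when no partition disconnects $G(A)$, which is connectivity of $G(A)$.

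Next, fix a spanning tree of the connected graph $G(A)$, root it at an arbitrary vertex, and order the vertices $v_1, v_2, \dots, v_n$ so that each $v_i$ with $i > 1$ appears after its parent (any BFS or DFS pre-order suffices). Let $A_i$ be the subconfiguration of $A$ on $\{v_1, \dots, v_i\}$. Because every prefix of such an ordering induces a connected subtree, the bond graph $G(A_i)$, which is the induced subgraph of $G(A)$ on $\{v_1, \dots, v_i\}$, is connected, and hence each $A_i$ is itself a 1-stable assembly by the observation above. I would then prove by induction on $i$ that each $A_i$ is producible. The base case $A_1$ is a single tile of $T$, which is producible by definition. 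For the inductive step, $A_i$ is obtained from $A_{i-1}$ by placing the single tile $A(v_i)$ at its lattice location; since $v_i$ is adjacent in the spanning tree to a vertex already present in $A_{i-1}$, this tile forms a matching-glue bond of strength at least $1 = \tau$ with a tile of $A_{i-1}$, so the producible assembly $A_{i-1}$ and the single (producible) tile $A(v_i)$ assemble into the 1-stable superassembly $A_i$. Hence $A_i$ is producible, and taking $i = n$ shows $A = A_n$ is producible.

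The construction is essentially routine; the only point requiring care is the ordering of the tiles. An arbitrary placement order could pass through disconnected, hence non-1-stable, hence non-producible, intermediate configurations, so the argument genuinely relies on growing along a spanning tree to keep every prefix connected and on the fact that each newly added tile bonds to an already-placed parent. With the tree-traversal order in hand, both the 1-stability of every intermediate and the sufficiency of the single attachment bond follow immediately, completing the induction.
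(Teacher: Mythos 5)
Your proof is correct and follows essentially the same approach as the paper: grow the assembly one tile at a time along a tree traversal of the bond graph (the paper uses BFS; you allow any spanning-tree pre-order), inducting on the prefix assemblies. Your write-up is in fact slightly more explicit than the paper's, since you spell out that temperature-1 stability is exactly connectivity of the bond graph and that each prefix is therefore 1-stable, steps the paper leaves implicit.
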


\begin{proof}
Let $B$ be a 1-stable assembly consisting of tiles in $T$.
Perform a breadth-first traversal of $G(B)$, starting at an arbitrary tile, to obtain an ordering on the tiles of $A$.
Now consider the set of assemblies $\{ A_n \}$ consisting of the first $n$ tiles reached in the breadth-first traversal for $1 \leq n \leq |B|$.
The assembly $A_1$ is the root tile of the breadth-first traversal and so is trivially producible.
Assume that the first $n$ assemblies are producible.
Assembly $A_{n}$ is a superassembly of $A_{n-1}$ and the the tile reached in step $n$ of the breadth-first search.
So by induction all assemblies $\{ A_n \}$, including $A_{|B|} = B$, are producible by $\mathcal{S}$.
\end{proof}

\begin{lemma}
\label{lem:repeated-glue-side-pair-case-1}
Let $\mathcal{S} = (T, f, 1)$ be a 2HAM system with UTA $A$. 
Let a glue-side pair appear twice on a simple cycle of $G(A)$ between tiles $t_1$ and $t_2$, and $t_3$ and $t_4$.
Then $|\{t_1, t_2, t_3, t_4\}| \neq 4$. 
\end{lemma}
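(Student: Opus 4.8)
The plan is to argue by contradiction: assume $|\{t_1,t_2,t_3,t_4\}| = 4$, i.e.\ the four tile types are distinct, and use the two identical glue-side pairs to reroute the cycle into a second terminal assembly, contradicting that $A$ is the unique terminal assembly. Without loss of generality I would take the repeated glue-side pair to be $(g, \{\dirE, \dirW\})$, so that at each of its two occurrences the western tile carries glue $g$ on its east side (a ``plug'') and the eastern tile carries glue $g$ on its west side (a ``socket''). Reading the two occurrences off the simple cycle, I would label so that $t_1, t_3$ are the plugs ($g_{\dirE}(t_1) = g_{\dirE}(t_3) = g$) and $t_2, t_4$ the sockets ($g_{\dirW}(t_2) = g_{\dirW}(t_4) = g$). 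Deleting the two repeated bonds splits the cycle into two arcs, one from $t_2$ to $t_3$ and one from $t_4$ to $t_1$, which I would use to isolate producible subassemblies of $A$ that expose a plug glue $g$ on one end and a socket glue $g$ on the other. Each such piece is $1$-stable and hence producible by Lemma~\ref{lem:1-stable-is-producible}.

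The key step is to re-glue these pieces using the interchangeability of the two identical glue-side pairs. Since the plug glue on $t_1$ and on $t_3$ is the same glue $g$ in the same orientation, and likewise for the sockets on $t_2$ and $t_4$, any plug may be inserted into any socket. In $A$ the plugs are matched to sockets as $t_1 \to t_2$ and $t_3 \to t_4$; the swapped matching $t_1 \to t_4$, $t_3 \to t_2$ uses only positive-strength bonds of the glue function, so repositioning the rigid pieces to realize it should produce another configuration over the tiles of $T$. Provided this configuration has disjoint domain it is $1$-stable, hence producible by Lemma~\ref{lem:1-stable-is-producible}, and when the four tiles are distinct it genuinely reroutes the cycle rather than reproducing it, so it is not equivalent to $A$. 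Completing or iterating this rerouted assembly then yields a terminal assembly different from $A$ (or, if the reroute can be repeated, an unbounded family of producible assemblies that cannot complete to $A$), either way contradicting the uniqueness of the terminal assembly $A$. I would conclude that the four tiles cannot all be distinct.

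The main obstacle is geometric feasibility: the subassemblies are rigid, so realizing the swapped bonds requires translating them into positions with disjoint domains, and I must rule out overlaps and confirm that the intended bonds actually form in the new positions. The heart of the argument is therefore to show that distinctness of $t_1, t_2, t_3, t_4$ forces the rerouting to embed in the plane without collision and to differ from $A$; conversely, the ways in which the construction could fail --- an unavoidable overlap, or the swap collapsing back onto $A$ --- are precisely the situations created by a coincidence among the four tiles, which is exactly why the hypothesis $|\{t_1,t_2,t_3,t_4\}| = 4$ is what the conclusion negates. Carrying out this collision analysis cleanly, and separating it from the companion orientation (the other way the two occurrences can sit on the cycle, treated in the Case~2 lemma), is where I expect the real work to lie.
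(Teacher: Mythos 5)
Your plan is genuinely different from the paper's proof, but it has two gaps, and the one you flag yourself is fatal rather than technical. First, the collision analysis you defer is not resolvable the way you suggest: whether a translated copy of a rigid arc can be placed to realize a swapped bond without overlap depends on the planar shape of the arcs, not on the distinctness of the four tile types, so the claim that ``an unavoidable overlap\dots{} [is] precisely the situation created by a coincidence among the four tiles'' is false --- spiral- or hook-shaped arcs can force overlaps with all four types distinct. Worse, the swap never produces a single rerouted finite assembly at all. Work out the incidences: deleting the two repeated edges leaves two paths, and in one cyclic orientation both swapped bonds $(t_1,t_4)$ and $(t_3,t_2)$ join a path to a \emph{translate of itself}, while in the other orientation the two swapped bonds require two \emph{different} translations of the same path (realizing both with one rigid motion forces the two occurrences to coincide). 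Either way the construction becomes an iterated chain of translated copies --- exactly the pumping structure of Lemma~\ref{lem:repeated-glue-side-pair-case-2} --- and every attachment in that chain can be blocked. The paper's device for handling blocked placements there (replace the blocking tile with the next tile along the cycle) relies on mismatch-freeness, but Lemma~\ref{lem:repeated-glue-side-pair-case-1} assumes only a UTA: no mismatch-freeness and no finiteness, so that machinery cannot be imported.

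Second, your endgame does not close even if the geometry cooperates. A producible assembly that is not equivalent to $A$, or an unbounded family of producible assemblies, does not contradict uniqueness of the terminal assembly as this paper defines it: producible assemblies need not complete to terminal ones in the 2HAM, and since Lemma~\ref{lem:repeated-glue-side-pair-case-1} does not assume $A$ is finite, ``arbitrarily large producibles'' is not automatically absurd (the paper draws that conclusion only under the UMFTA hypothesis, in Lemma~\ref{lem:repeated-glue-side-pair-case-2}). The paper's actual proof sidesteps all geometry: it passes to the seeded version of $\mathcal{S}$ and \emph{forces the substitution dynamically} --- whenever $t_1$ is placed with its east side empty, $t_4$ must immediately attach there, and whenever $t_2$ is placed with its west side empty, $t_3$ must immediately attach there (legal at temperature~1 because the glue-side pairs coincide). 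Distinctness of the four tiles is used only to guarantee $t_4 \neq t_2$ and $t_3 \neq t_1$, so these forced attachments never themselves create a $t_1$--$t_2$ adjacency; by induction over the attachment order, no $t_1$ ever sits west of a $t_2$. Since this restricted system has the same terminal assemblies as $\mathcal{S}$, its unique terminal assembly must be $A$, which \emph{does} contain $t_1$ west of $t_2$ --- a contradiction requiring no translation of subassemblies and no collision argument. If you want to salvage your approach, you would essentially have to re-prove Lemma~\ref{lem:repeated-glue-side-pair-case-2}'s blocking analysis under weaker hypotheses than it actually needs.
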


\begin{proof}
We prove the result by contradiction.
Without loss of generality, assume the glue-side pair is $(1, \{\dirE, \dirW\})$, so $t_1$ and $t_3$ are west of $t_2$ and $t_4$, respectively.
Consider the seeded version of $\mathcal{S}$ additionally restricted in two ways: any producible assembly with $t_1$ and no tile east of $t_1$ must immediately attach $t_4$ east of $t_1$, and any producible assembly with $t_2$ and no tile west of $t_2$ must immediately attach $t_3$ west of $t_2$.

This seeded version of $\mathcal{S}$ has the same set of terminal assemblies as $\mathcal{S}$, and so has a unique terminal assembly $A$.
However, the assembly contains no occurrences of $t_1$ west of $t_2$ and so cannot be $A$, a contradiction.
\end{proof}

\begin{lemma}
\label{lem:repeated-glue-side-pair-case-2}
Let $\mathcal{S} = (T, f, 1)$ be a 2HAM system with UMFTA $A$.
Let a glue-side pair appear twice on a simple cycle of $G(A)$ between tiles $t_1$ and $t_2$, and $t_3$ and $t_4$.
Then $|\{t_1, t_2, t_3, t_4\}| \neq 2$. 
\end{lemma}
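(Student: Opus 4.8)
The plan is to argue by contradiction, assuming $|\{t_1,t_2,t_3,t_4\}| = 2$. As in the proof of Lemma~\ref{lem:repeated-glue-side-pair-case-1}, I would normalize the repeated glue-side pair to $(1, \{\dirE, \dirW\})$, so that $t_1, t_3$ are the \emph{west} tiles (each carrying glue $1$ on its $\dirE$ side) and $t_2, t_4$ the \emph{east} tiles (each carrying glue $1$ on its $\dirW$ side). Write the two edges as occurring between positions $u$ and $u + \langle 1, 0 \rangle$, and between $v$ and $v + \langle 1, 0 \rangle$, where $u \neq v$ since the edges are distinct; set $\sigma = v - u \neq \langle 0, 0 \rangle$.

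First I would dispose of every way the two tile types can fail to respect the west/east split. Any partition of $\{t_1,t_2,t_3,t_4\}$ into two types other than $\{t_1,t_3\}, \{t_2,t_4\}$ places some west tile and some east tile into a common type $t$; then $t$ carries glue $1$ on both its $\dirE$ and its $\dirW$ side. Consequently the infinite horizontal row of copies of $t$ is a $1$-stable assembly over $T$, hence producible by Lemma~\ref{lem:1-stable-is-producible}. Its finite sub-rows are then producible assemblies of unbounded size, which is impossible: every producible assembly is a subassembly of some terminal assembly, and the only terminal assembly is the finite $A$, so every producible assembly has size at most $|A|$. This contradiction leaves only the case $t_1 = t_3 = a$ and $t_2 = t_4 = b$ with $a \neq b$ and $g_{\dirE}(a) = g_{\dirW}(b) = 1$.

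For this remaining case I would pump along $\sigma$ using the cycle. Let $R$ be a path in $G(A)$ along the simple cycle joining the two copies of $a$ (at $u$ and at $v$); as a connected subgraph of $G(A)$ it is a $1$-stable subassembly, hence producible by Lemma~\ref{lem:1-stable-is-producible}, and its two endpoint tiles are copies of $a$ differing exactly by $\sigma$. The idea is to overlay the translates $R, R + \sigma, R + 2\sigma, \dots$: consecutive translates meet at a shared copy of $a$ (the tile at $u + k\sigma$), and the position set of the union contains all of $u, u + \sigma, u + 2\sigma, \dots$, so any finite prefix $\bigcup_{k=0}^{K}(R + k\sigma)$ is a connected assembly whose size grows without bound in $K$. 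If each such prefix is a legal mismatch-free assembly over $T$, then Lemma~\ref{lem:1-stable-is-producible} again makes it producible, once more contradicting the size bound $|A|$ and finishing the proof.

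The main obstacle is exactly the parenthetical ``if'': I must show the overlaid translates form a valid configuration, i.e.\ that no lattice cell receives two different tiles and that no new mismatch is created where $R$ meets $R + m\sigma$. Since $A$ is bounded, only finitely many offsets $m$ can produce an overlap, so this reduces to a finite $\sigma$-periodicity statement $A(p) = A(p - m\sigma)$ along $R$. I expect to establish it by choosing the cycle (equivalently the arc $R$) to be shortest among those carrying a repeated glue-side pair, and then using mismatch-freeness: any conflicting overlap would either yield a strictly shorter such cycle or force two coincident tile sides with different glues, contradicting minimality or mismatch-freeness respectively. Handling this consistency step cleanly is where the finiteness and mismatch-freeness hypotheses of the UMFTA are genuinely used, in contrast to Lemma~\ref{lem:repeated-glue-side-pair-case-1}, whose rerouting argument degenerates here precisely because $t_1 = t_3$ and $t_2 = t_4$.
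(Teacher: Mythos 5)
Your setup and first step match the paper's: you rule out the ``crossed'' identification by observing that a tile carrying glue $1$ on both its $\dirE$ and $\dirW$ sides yields producible assemblies of unbounded size, leaving $t_1 = t_3$ and $t_2 = t_4$ in the same relative positions. The genuine gap is in the main step, and it is exactly the step you flag yourself: the consistency of the overlaid translates $R, R+\sigma, R+2\sigma, \dots$ is never established, and there is no apparent way to establish it, because collisions between translates are not avoidable in this argument. The paper's proof is built around precisely this fact: when the translated copy of the cycle is grown tile by tile, a placement can be blocked by an already-present tile of a \emph{different} type. The paper does not prove such collisions cannot happen; it resolves them, by removing the blocking tile $t_{\rm old}$ and placing the next tile $t_{\rm new}$ of the cycle in its stead. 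That replacement is legitimate because (i) just before the replacement the assembly is spanned by a chain of cycles in which consecutive cycles share two vertices, so deleting one tile removes at most one vertex from each cycle and cannot disconnect the bond graph, leaving a 1-stable assembly; (ii) Lemma~\ref{lem:1-stable-is-producible} then makes both that assembly and the assembly with $t_{\rm new}$ placed producible; and (iii) mismatch-freeness forces $t_{\rm new}$ to bond with both cycle-neighbors of $t_{\rm old}$. Iterating this ``grow and replace'' process yields producible assemblies $C_n$ of strictly increasing size, giving the contradiction.

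Your proposed repair---choose a shortest cycle carrying a repeated glue-side pair and argue that a conflicting overlap yields either a strictly shorter such cycle or a mismatch---has no mechanism behind it: a cell $p$ with $A(p) \neq A(p - m\sigma)$ merely exhibits two distinct tile types at translation-related positions of $A$, and nothing about that produces a shorter cycle containing a repeated glue-side pair, nor two coincident sides with different glues in any producible assembly. A secondary issue: your contradiction rests on the claim that every producible assembly is a subassembly of some terminal assembly, which is not a theorem of the 2HAM; the paper only uses the weaker (and safer) statement that arbitrarily large producible assemblies are incompatible with $A$ being the unique \emph{finite} terminal assembly. To close the gap you would need to abandon the pure union-of-translates picture and allow the pumped path to overwrite blocking tiles while proving the intermediate assemblies remain 1-stable---which is exactly the content of the paper's ``growing a second cycle'' and ``cycle pumping'' argument.
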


\begin{proof}
We prove the result by contradiction.
First, observe that if $|\{t_1, t_2, t_3, t_4\}| = 2$, then $t_1 = t_3$ and $t_2 = t_4$, otherwise $\mathcal{S}$ produces an infinite assembly from just this set of four tiles.
Also, $t_1$ and $t_2$ must appear in the same relative positions in both occurrences of the glue-side pair, otherwise $\mathcal{S}$ produces an infinite assembly.

\textsc{Growing a second cycle.} 
We carry out seeded assembly, starting with the assembly $C_1$ consisting of tiles on the cycle.
Recall that $G(C_1)$ is a cycle with two occurrences of the adjacent tile pair $t_1$ and $t_2$.
Starting at the second occurrence of the pair, we attach a sequence the single tiles to $C_1$ occurring along the cycle $G(C_1)$, starting at the first occurrence of the pair (see Figure~\ref{fig:pumping-pf-1}).
The sequence finishes with reaching the first occurrence of $t_1$ and $t_2$ again, having made a complete tour of tile attachments they occur along the cycle $G(C_1)$.
We call the resulting assembly $C_2$.

\begin{figure}[ht]
\centering
\makebox[\textwidth][c]{\includegraphics[scale=1.0]{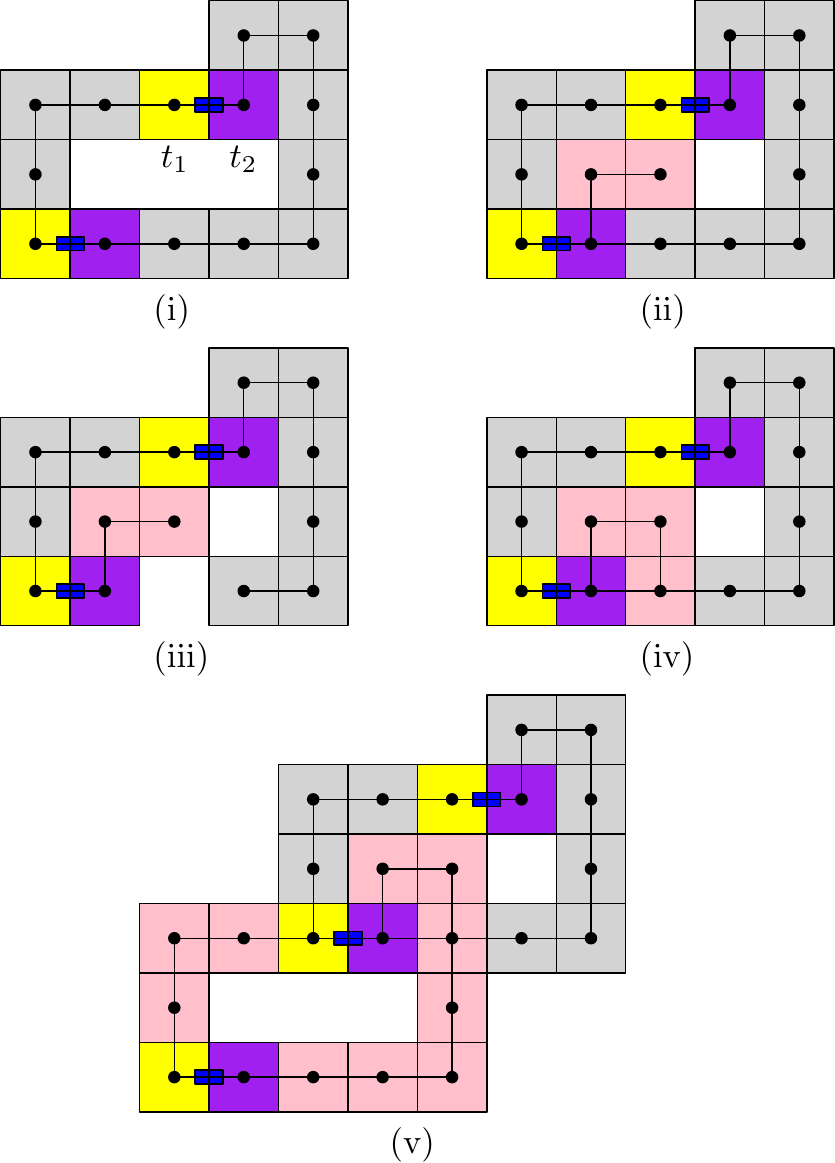}}
\caption{Assembling $C_2$ (step~(v)), starting with $C_1$ (step~(i)).
In steps~(iii) and~(iv), a blocking tile is replaced with the next tile along the cycle.
The resulting bond graph has a spanning subgraph of two cycles sharing two vertices.}
\label{fig:pumping-pf-1}
\end{figure}

Because the cycle is simple and thus non-self-intersecting, no tile attachment is prevented (blocked) by the presence of a tile appearing earlier along the cycle.
However, tile attachments may be blocked by the presence of a tile of $C_1$.
In this situation, we replace the blocking tile (called $t_{\rm old}$) of $C_1$ with the next tile along the cycle ($t_{\rm new}$).

Just before removing $t_{\rm old}$, the assembly is spanned by the tiles of $C_1$ and the path of tiles attached so far.
The cycle and path share two tiles: the second occurrence of $t_1$ and $t_2$ in $C_1$. 
Removing $t_{\rm old}$ yields a 1-stable assembly, and placing $t_{\rm new}$ yields another producible assembly.
By Lemma~\ref{lem:1-stable-is-producible}, both of these assemblies are producible assemblies of $\mathcal{S}$. 
Since $A$ has no mismatches, $t_{\rm new}$ attaches to both neighbors of $t_{\rm old}$ along the simple cycle of $G(C_1)$.

We repeat this replacement process every time a blocking tile is encountered, and attach tiles until the first occurrence of $t_1$ and $t_2$ is reached.
At this point, the path is closed to form a second cycle and we call the assembling resulting from \emph{growing} this second cycle $C_2$.

The sequence of tiles around $C_1$ form a cycle in $G(C_2)$, and the newly attached sequence of tiles form a second cycle in $G(C_2)$.
The cycles also share a common pair of vertices: the two tiles $t_1$ and $t_2$ in the second occurrence of the glue-side pair in $C_1$.

\textsc{Cycle pumping.} 
We now grow additional cycles indefinitely to produce an infinite sequence of producible assemblies $\{ C_n \}$ of $\mathcal{S}$.
In the $n$th repetition, the second occurrence of the glue-side pair on the cycle formed in the $(n-1)$st repetition is used as the starting point for placing the tiles as they appear around the cycle.
Assume by induction that the assembly at the start of the $n$th repetition, $G(C_{n-1})$, has a subgraph consisting of a sequence of $n-1$ cycles, where each cycle in the sequence shares two vertices with adjacent cycles in the sequence, and all vertices belong to some cycle.
The sequence of tile placements to produce $C_n$ then extends this graph with a path.

Replacing a blocking tile with the next tile along the $n$th cycle is always possible, as removing the tile removes at most one vertex from each cycle and disconnecting the graph requires removing at least two vertices from a cycle.
At the end of the sequence of placements, the bond graph $G(C_n)$ of the resulting assembly consists of a sequence of $n$ cycles, each sharing a pair of vertices with adjacent cycles in the sequence. 

The set of locations of tiles in $C_n$ is $\{i \cdot (x_{o_1} - x_{o_2}, y_{o_1} - y_{o_2}) + (x, y) \mid 0 \leq k \leq i, (x, y) \in {\rm dom}(C_1) \}$, where $(x_{o_1}, y_{o_1})$ and $(x_{o_2}, y_{o_2})$ are the locations of $t_1$ in the first and second occurrences of the glue-side pairs, respectively. 
So $|C_n| > |C_{n-1}|$ and the $n$th cycle contains at least one vertex not found in any previous cycle. 
So $\{C_n\}$ contains arbitrarily large assemblies producible by $\mathcal{S}$, and $A$ cannot be the UMFTA of $\mathcal{S}$, a contradiction. 
\end{proof}

\begin{lemma}
\label{lem:repeated-glue-side-pair-case-3}
Let $\mathcal{S} = (T, f, 1)$ be a 2HAM system with UTA $A$.
Let a glue-side pair appear twice on a simple cycle of $G(A)$ between tiles $t_1$ and $t_2$, and $t_3$ and $t_4$.
Then $|\{t_1, t_2, t_3, t_4\}| \neq 3$.
\end{lemma}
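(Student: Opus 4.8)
The plan is to turn the repeated glue-side pair into a source of \emph{non-determinism}: a glue that bonds two distinct tile types lets me attach the ``wrong'' tile and thereby build a producible assembly that $A$ does not contain. As in Lemma~\ref{lem:repeated-glue-side-pair-case-1}, I assume without loss of generality that the repeated glue-side pair is $(1, \{\dirE, \dirW\})$, so the two west tiles $t_1, t_3$ expose glue $1$ on their \dirE{} side and the two east tiles $t_2, t_4$ expose glue $1$ on their \dirW{} side. Since $|\{t_1, t_2, t_3, t_4\}| = 3$, we cannot have both $t_1 = t_3$ and $t_2 = t_4$, as that would leave at most two distinct tiles. Hence at least one side carries two distinct tiles, and after reflecting the system across a vertical axis if necessary I may assume the west tiles differ, so $t_1 \neq t_3$ while both still carry glue $1$ on their \dirE{} side.

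Now the \dirW{} neighbor of $t_2$ is not forced by glue $1$: both $t_1$ and $t_3$ bond to it. In $A$ the first occurrence places $t_1$ immediately west of $t_2$, and this bond is an edge of the given simple cycle of $G(A)$. I would first produce a subassembly $B \subseteq A$ that still contains $t_2$ but leaves the cell of $t_1$ vacant. This is exactly where the cycle is used: because the bond $t_1$--$t_2$ is a cycle edge, deleting $t_1$ (and any tiles reachable in $G(A)$ only through it) leaves $t_2$ attached through the complementary arc, so $B$ has a connected bond graph and is therefore $1$-stable and producible by Lemma~\ref{lem:1-stable-is-producible}. Attaching a fresh copy of $t_3$ to the west of $t_2$ is legal, since $t_3$ carries glue $1$ on its \dirE{} side, and yields a producible assembly $A''$ in which $t_3$ sits where $A$ has $t_1$.

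Extending $A''$ to a terminal assembly $T$, I would conclude that $T$ is inequivalent to $A$, contradicting uniqueness and completing the proof. This last conclusion is the main obstacle: terminal assemblies are identified up to translation, so I must rule out $A'' \subseteq A$ under \emph{every} translation, not merely the identity (under which the two already differ at the cell of $t_1$). The clean case is when $t_3$ never appears immediately west of $t_2$ anywhere in $A$, for then the local pattern introduced by the swap is absent from $A$ and no translation can embed $A''$ into $A$. The stubborn case is when that pattern does occur elsewhere in $A$: there the novelty must be certified globally, and I expect to do so by taking $B$ to exhaust as much of $A$ as the cycle permits (all of $A$ but a pendant, when $A$ is finite), forcing any embedding shift to be the identity. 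Should this bookkeeping prove too delicate, the fallback is to iterate the swap so as to manufacture arbitrarily large producible assemblies and reduce the remaining situation to the arguments of Lemmas~\ref{lem:repeated-glue-side-pair-case-1} and~\ref{lem:repeated-glue-side-pair-case-2}.
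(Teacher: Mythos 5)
Your opening moves coincide with the paper's: the same WLOG normalization, the reduction to $t_1 \neq t_3$, and the cycle-enabled swap of $t_3$ for $t_1$, with producibility supplied by Lemma~\ref{lem:1-stable-is-producible}. The genuine gap is at the decisive step, and you acknowledge it yourself. Your main line --- extend the swapped assembly $A''$ to a terminal assembly and contradict uniqueness --- needs two things you do not establish. First, that the growth process terminates in a terminal assembly at all: under the bare UTA hypothesis, a producible assembly that can grow forever yields no immediate contradiction, and the paper's corresponding argument (in Lemma~\ref{lem:repeated-glue-side-pair-case-2}) leans on the finiteness in the UMFTA hypothesis at exactly this point. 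Second, that no terminal extension of $A''$ is $A$ up to translation. Your proposed repair --- take $B$ to exhaust all of $A$ except $t_1$, so that any embedding shift is forced to be the identity --- breaks down in precisely the case you set aside as a ``pendant'': any tiles of $A$ attached only through $t_1$ must be deleted along with $t_1$, so $A''$ is then strictly smaller than $A$, a nontrivial translation embedding $A''$ into $A$ is possible, and the terminal extension of $A''$ can perfectly well be $A$ itself.

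The paper avoids all of this bookkeeping by one structural observation your proposal never makes: when $|\{t_1,t_2,t_3,t_4\}|=3$, the identifications $t_1 = t_4$ or $t_2 = t_3$ (likewise $t_1 = t_2$ or $t_3 = t_4$) are impossible, since any of them places glue $1$ on both the \dirE{} and \dirW{} sides of a single tile type and pumps an infinite row; hence, given $t_1 \neq t_3$, necessarily $t_2 = t_4$. With that equality in hand, the paper performs your swap on the \emph{cycle-only} subassembly $C$ and notes that the resulting producible assembly $C'$ contains the same glue-side pair twice between the \emph{same} two tiles $t_3$ and $t_2$ --- exactly the configuration already ruled out by the pumping argument of Lemma~\ref{lem:repeated-glue-side-pair-case-2}. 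No uniqueness-up-to-translation argument is needed anywhere. Your closing ``fallback'' sentence gestures at this reduction, but as written it is only a gesture: without first establishing $t_2 = t_4$, the swapped assembly does not realize the case-2 configuration, and that equality is the missing content of the proof.
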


\begin{proof}
Assume without loss of generality that the glue-side pair is $(1, \{\dirE, \dirW\})$ 
Let $t_1$ and $t_3$ be west of $t_2$ and $t_4$, respectively.
If $t_1 = t_4$ or $t_2 = t_3$, then $\mathcal{S}$ produces an infinite assembly, so either $t_1 = t_3$ or $t_2 = t_4$; assume without loss of generality that $t_2 = t_4$.

Consider the subassembly $C$ of $A$ consisting only of the tiles forming the cycle.
Since $G(C)$ is a cycle, removing $t_1$ from $C$ and replacing it with $t_3$ yields a 1-stable (and thus producible by Lemma~\ref{lem:1-stable-is-producible}) assembly $C'$ of $\mathcal{S}$.
But $C'$ has two occurrences of the same glue-side pair between tiles $t_3$ and $t_2$, contradicting Lemma~\ref{lem:repeated-glue-side-pair-case-2}.
\end{proof}

\begin{lemma}
\label{lem:no-repeated-glue-in-cycle}
Let $\mathcal{S} = (T, f, 1)$ be a 2HAM system with UMFTA $A$.
Then no glue-side pair appears twice on a simple cycle of $G(A)$. 
\end{lemma}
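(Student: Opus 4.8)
The plan is to argue by contradiction, invoking the three preceding case lemmas to dispose of almost every possibility and then handling the single remaining case directly. Suppose some glue-side pair does appear twice on a simple cycle of $G(A)$, say between tiles $t_1$ and $t_2$ in its first occurrence and between $t_3$ and $t_4$ in its second occurrence. Then $|\{t_1, t_2, t_3, t_4\}|$ must lie in $\{1, 2, 3, 4\}$, and I would rule out each of these four values in turn.

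First I would observe that the three values $|\{t_1, t_2, t_3, t_4\}| = 4$, $= 3$, and $= 2$ are precisely the situations forbidden by Lemmas~\ref{lem:repeated-glue-side-pair-case-1}, \ref{lem:repeated-glue-side-pair-case-3}, and~\ref{lem:repeated-glue-side-pair-case-2}, respectively. Since $\mathcal{S}$ has a UMFTA, it in particular has a UTA, so the hypotheses of all three lemmas are met; each therefore rules out the corresponding value. This leaves only the collapsed case $|\{t_1, t_2, t_3, t_4\}| = 1$, which the three lemmas do not cover.

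For this final case I would note that $|\{t_1, t_2, t_3, t_4\}| = 1$ forces in particular $t_1 = t_2$. Taking the glue-side pair to be $(1, \{\dirE, \dirW\})$ without loss of generality, with $t_1$ west of $t_2$, this means a single tile type carries glue $1$ on both its east and west sides. Copies of this tile then bond in an infinite east-west chain, and since each finite prefix of this chain is $1$-stable and hence producible by Lemma~\ref{lem:1-stable-is-producible}, the system produces arbitrarily large assemblies. This contradicts the finiteness of the UMFTA $A$. As all four values of $|\{t_1, t_2, t_3, t_4\}|$ yield a contradiction, no glue-side pair can appear twice on a simple cycle of $G(A)$.

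The main point to watch, rather than any genuine obstacle, is that the substantive work has already been carried out in the three case lemmas; the only fresh observation required here is that the degenerate collapse $|\{t_1, t_2, t_3, t_4\}| = 1$ falls outside their scope and must instead be excluded by finiteness rather than by uniqueness of the terminal assembly.
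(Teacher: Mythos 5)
Your proposal is correct and follows essentially the same route as the paper: a case analysis on $k = |\{t_1,t_2,t_3,t_4\}|$, dispatching $k=4,3,2$ via Lemmas~\ref{lem:repeated-glue-side-pair-case-1}, \ref{lem:repeated-glue-side-pair-case-3}, and~\ref{lem:repeated-glue-side-pair-case-2}, and excluding $k=1$ by the infinite east--west chain it would generate. The paper dismisses the $k=1$ case with a one-line ``otherwise $\mathcal{S}$ produces an infinite assembly''; your elaboration of that step (same glue on opposite sides, producible prefixes via Lemma~\ref{lem:1-stable-is-producible}, contradiction with finiteness of the UMFTA) is exactly the intended argument.
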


\begin{proof}
Suppose that some glue-side pair appears twice, and let the four tiles of the two occurrences be $t_1$, $t_2$, $t_3$, and $t_4$.
Consider $k = |\{t_1, t_2, t_3, t_4\}| \in \{1, 2, 3, 4\}$.
Clearly $k \neq 1$, as otherwise $\mathcal{S}$ produces an infinite assembly. 
By Lemma~\ref{lem:repeated-glue-side-pair-case-1}, $k \neq 4$.
By Lemma~\ref{lem:repeated-glue-side-pair-case-2}, $k \neq 2$.
Finally, Lemma~\ref{lem:repeated-glue-side-pair-case-3} implies $k \neq 3$.
So no glue-side pair can occur twice on a simple cycle of $G(A)$.
\end{proof}

\begin{lemma}
\label{lem:all-occurrences-on-off-cycle}
Let $\mathcal{S} = (T, f, 1)$ be a 2HAM system with UMFTA $A$.
Let $(g, p)$ be the glue-side pair of an edge $e$ in $G(A)$.
Then if $e$ lies on a simple cycle in $G(A)$, all edges with glue-side pair $(g, p)$ lie on simple cycles of $G(A)$. 
\end{lemma}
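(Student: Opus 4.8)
The plan is to argue by contradiction, ultimately producing arbitrarily large producible assemblies of $\mathcal{S}$ and thereby contradicting the finiteness of the UMFTA $A$, in the spirit of the proof of Lemma~\ref{lem:repeated-glue-side-pair-case-2}. So suppose $e$ lies on a simple cycle $Z$ of $G(A)$ but some edge $e'$ with the same glue-side pair $(g,p)$ lies on no simple cycle, i.e.\ $e'$ is a bridge of $G(A)$. Write $e=(t_1,t_2)$ and $e'=(t_3,t_4)$, and assume without loss of generality that $(g,p)=(1,\{\dirE,\dirW\})$ with $t_1,t_3$ to the west of $t_2,t_4$.

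The first step is to cut the bridge. Since $e'$ is a bridge, it is the unique edge of $G(A)$ joining the two components $A_3 \ni t_3$ and $A_4 \ni t_4$ of $G(A)-e'$. A routine check (extending any internal partition of $A_3$ to a partition of $A$ that leaves $A_4$ undivided) shows each of $A_3$, $A_4$ is $1$-stable, hence producible by Lemma~\ref{lem:1-stable-is-producible}. The essential gain is that in $A_3$ the east side of $t_3$ now carries an exposed glue $1$ with the cell east of $t_3$ empty, and symmetrically for $t_4$ in $A_4$. Since $Z$ is a simple cycle it cannot traverse the bridge $e'$, so $Z$ lies entirely inside $A_3$ or entirely inside $A_4$; by the $\dirE\leftrightarrow\dirW$ symmetry of the argument I assume $Z\subseteq A_3$. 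By Lemma~\ref{lem:no-repeated-glue-in-cycle}, $e$ is the only edge of glue-side pair $(1,\{\dirE,\dirW\})$ on $Z$.

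Next I would grow a translated copy of the cycle out of the freed glue. Let $v=\mathrm{pos}(t_3)-\mathrm{pos}(t_1)$, which is nonzero because $e\neq e'$. Since $t_3$ carries the same exposed east glue as $t_1$ over an empty cell, $t_3$ can play the role of $t_1$: starting from $A_3$ I would attach, tile by tile, the translates by $v$ of the tiles of $Z$, tracing $Z$ from the cell east of $t_3$ (which is $\mathrm{pos}(t_2)+v$) onward. Mismatch-freeness guarantees that each translated tile bonds to its neighbours exactly as the corresponding tile of $Z$ does, and any collision with a tile already present in $A_3$ is resolved by the blocking-tile replacement of Lemma~\ref{lem:repeated-glue-side-pair-case-2}, swapping the blocker for the next tile along the translated cycle. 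Iterating the translation by $2v,3v,\dots$ pushes the structure arbitrarily far in direction $v$; the goal is to extract from this an unbounded family of producible assemblies, contradicting the finiteness of $A$.

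The main obstacle is exactly this growth step, for the same reasons as in Lemma~\ref{lem:repeated-glue-side-pair-case-2}: I must check that the replacement moves keep every intermediate assembly connected and $1$-stable (so that Lemma~\ref{lem:1-stable-is-producible} keeps applying) and that the translated copies genuinely enlarge the assembly rather than folding back onto tiles already placed. The delicate point peculiar to this lemma is that $t_1$ and $t_3$ may be distinct tile types, so the translate of $Z$ need not close up perfectly onto $t_3$; controlling this is where the hypotheses must be used in full. Indeed, the bridge $e'$ is what frees an east glue $1$ over an empty cell to seed the translate, while the cycle $Z$ is what supplies a closed, repeatable unit to translate: a glue-side pair all of whose edges are bridges offers no cycle to translate, and one all of whose edges lie on cycles offers no single cut freeing a glue over an empty cell, so it is precisely the mixed configuration that is contradictory.
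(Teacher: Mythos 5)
Your overall strategy (cut the bridge $e'$, use the freed glue to seed a translated copy of the cycle $Z$, then pump to get arbitrarily large producible assemblies) has two gaps that you partly acknowledge but that are fatal as written. First, the blocking-tile replacement you import from Lemma~\ref{lem:repeated-glue-side-pair-case-2} is not legal here. In that proof the replacement is justified by an invariant: at every moment the assembly is spanned by a chain of cycles sharing pairs of vertices, so deleting one blocking tile removes at most one vertex from each cycle and cannot disconnect the assembly, which is exactly what lets Lemma~\ref{lem:1-stable-is-producible} certify the intermediate assemblies as producible. In your setting the ambient assembly is $A_3$, an arbitrary connected subassembly of $A$; a blocking tile may be a cut vertex of $G(A_3)$, in which case removing it disconnects the assembly, the result is not $1$-stable, and the swap yields nothing producible. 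Second, and more fundamentally, your pumping has no engine. In Lemma~\ref{lem:repeated-glue-side-pair-case-2} iteration works because the repeated glue-side pair lies \emph{on the cycle itself}, so each newly grown cycle contains a fresh occurrence of the seed pair. Here your seed is the glue exposed by cutting $e'$, and $e'$ does not lie on $Z$; once the first translated copy of $Z$ is attached (consuming that exposed glue), no new exposed $g$ glue over an empty cell is created, so there is nothing from which to start the translate by $2v$. The step ``iterating the translation by $2v, 3v, \dots$'' is therefore unsupported, and the contradiction with finiteness never materializes. (The closing-up worry you raise about $t_1 \neq t_3$ is actually the least of the problems: mismatch-freeness of producible assemblies forces the last translated tile to bond to $t_3$.)

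For contrast, the paper proves the lemma without any contradiction or pumping: it passes to a seeded version of $\mathcal{S}$ in which, every time an attachment leaves $g$ exposed on a side in $p$, the sequence of tiles along the cycle containing $e$ is grown immediately. Either this sequence completes the cycle, or it is blocked by a tile of the existing assembly $B$; mismatch-freeness forces a bond between the last placed tile and the blocker, and since the grown path is simple the blocker must lie in $B$, so a cycle through the new edge is closed via a path in $B$. Because reordering attachments does not change the unique terminal assembly, every edge of $A$ with glue-side pair $(g, p)$ ends up on a simple cycle. If you want to salvage the bridge-cutting idea, you must supply what replaces the missing iteration; as it stands your argument shows at most that one extra copy of $Z$ can sometimes be grown, which is not a contradiction.
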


\begin{proof}
Consider a seeded version of $\mathcal{S}$ modified in the following way: each time an attaching tile $t$ leaves $g$ exposed on a side in $p$, carry out the sequence of tile attachments as they occur on the cycle in $A$ containing $e$.
Let $B$ be the assembly as it appears just after $t$ is placed.
The first of these attachments is the one of the two tiles forming the vertices of $e$ in $G(A)$.
Continue the attachments until attaching the next tile on the cycle is blocked by an existing tile (see Figure~\ref{fig:all-instances-on-cycles}).

\begin{figure}[ht]
\centering
\includegraphics[scale=1.0]{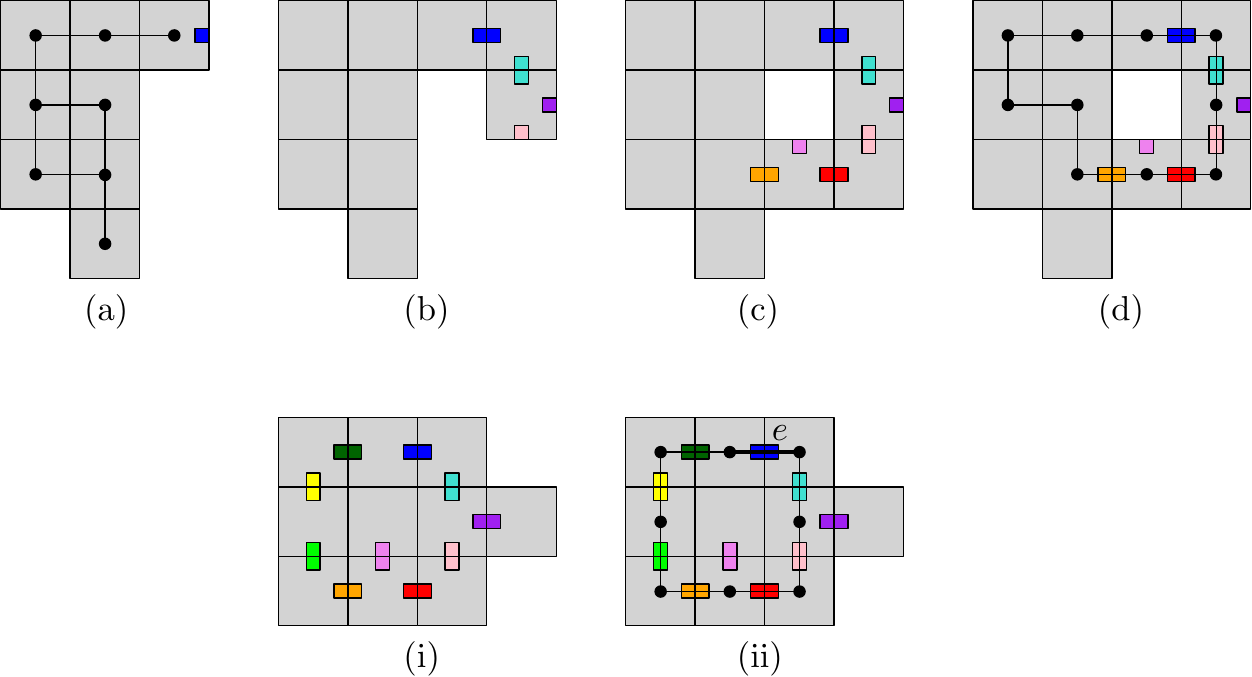}
\caption{An example demonstrating the proof of Lemma~\ref{lem:all-occurrences-on-off-cycle}. 
Part~(a): growing $A$ using a seeded assembly process, the glue-side $(\rm{blue}, \dirE)$ appears.
Part~(b): immediately a sequence of single-tile attachments following those found on a cycle of $A$ (parts~(i) and~(ii)) are placed.
Part~(c): the next tile in the cycle can no longer be placed, and a bond between the assembly of part~(a) and the growing cycle is formed.
Part~(d): the resulting cycle consisting of paths through the sequence of single-tile attachments and the assembly before the sequence of attachments.}
\label{fig:all-instances-on-cycles}
\end{figure}

Each tile attached leaves an exposed glue matching a glue of the next tile in the cycle, so the blocking tile must form a bond with the last attached tile.
Since the sequence of tile attachments follows a simple cycle in $A$, the blocking tile must have been part of $B$.
So the final attachment results in a bond between the sequence of tile placements (part~(c) of Figure~\ref{fig:all-instances-on-cycles}) and $B$ (part~(a) of Figure~\ref{fig:all-instances-on-cycles}).
Thus a cycle is formed by a path starting at $t$ through the sequence of single-tile attachments and a path through $B$ ending at $t$.

Because this modified version of $\mathcal{S}$ only modified the order in which tiles attach, it produces a unique terminal assembly $A$.
So all edges of $A$ with glue-side pair $(g, p)$ lie on simple cycles of $G(A)$. 
\end{proof}

\begin{lemma}[Tree-ification Lemma]
\label{lem:treeification}
Let $\mathcal{S} = (T, f, 1)$ be a 2HAM system with UMFTA $A$.
Then there exists a 2HAM system $\mathcal{S}' = (T', f', 1)$ with UMFTA $A'$ and $|\mathcal{S}'| \leq |\mathcal{S}|$, where $A'$ has the shape of $A$ and $G(A')$ is a tree. 
\end{lemma}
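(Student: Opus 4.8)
The plan is to destroy the cycles of $G(A)$ one at a time by deleting glues, each deletion preserving the shape, connectivity, and UMFTA status of the assembly, until the bond graph is a tree. Fix an edge $e$ lying on a simple cycle of $G(A)$ (one exists precisely when $G(A)$ is not a tree), and let $(g, p)$ be its glue-side pair; without loss of generality $p = \{\dirE, \dirW\}$. The operation I perform is to eliminate this glue-side pair: I replace $g$ by the null glue $\varnothing$ on every $\dirE$-side and every $\dirW$-side of every tile type on which it occurs. Because $A$ is mismatch-free, every $\dirE$--$\dirW$ adjacency carrying $g$ has $g$ on both coincident sides, so after the replacement each such adjacency becomes a $\varnothing$--$\varnothing$ adjacency: no new mismatch is created, and the set of deleted edges is exactly $E_{g,p}$, the set of all edges of $G(A)$ with glue-side pair $(g,p)$. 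Let $T'$ be the resulting tile set, let $f' = f$ (the null glue is already present in $\Sigma$), and let $A'$ be $A$ with its tiles relabelled over $T'$; then $A'$ has the shape of $A$, $G(A') = G(A) \setminus E_{g,p}$, and the operation adds no tile types (it may merge some), so $|\mathcal{S}'| \le |\mathcal{S}|$.

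The heart of the argument, and the step I expect to be the main obstacle, is showing that deleting all of $E_{g,p}$ at once does not disconnect $G(A)$. Here the two preceding lemmas combine cleanly. By Lemma~\ref{lem:all-occurrences-on-off-cycle}, since $e$ lies on a simple cycle, every edge of $E_{g,p}$ lies on some simple cycle of $G(A)$. By Lemma~\ref{lem:no-repeated-glue-in-cycle}, no simple cycle contains two edges of the same glue-side pair, so each such cycle contains \emph{exactly one} edge of $E_{g,p}$. Consequently, for every $e' \in E_{g,p}$ there is a simple cycle $C_{e'}$ through $e'$ whose remaining arc $C_{e'} \setminus e'$ is a path joining the endpoints of $e'$ while avoiding $E_{g,p}$ entirely. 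Given any path in $G(A)$ between two tiles, I reroute each of its edges lying in $E_{g,p}$ through the corresponding detour, obtaining a walk in $G(A) \setminus E_{g,p}$ with the same endpoints. Since $G(A)$ is connected, $G(A') = G(A) \setminus E_{g,p}$ is therefore connected, and the shape of the assembly is unchanged.

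It remains to confirm that $A'$ is the UMFTA of $\mathcal{S}' = (T', f', 1)$. Connectivity of $G(A')$ makes $A'$ a $1$-stable assembly over $T'$, hence producible by Lemma~\ref{lem:1-stable-is-producible}, and the relabelling preserves mismatch-freeness and finiteness as noted above. For terminality and uniqueness I use the relabelling map $\phi \colon T' \to T$ that restores $g$: for any producible assembly $B'$ of $\mathcal{S}'$ we have $G(B') \subseteq G(\phi(B'))$, so $\phi(B')$ is $1$-stable and thus producible in $\mathcal{S}$; since every producible assembly of a system with a UMFTA is a subassembly of that UMFTA, $\phi(B') \subseteq A$, whence $B' \subseteq A'$. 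In particular no producible assembly of $\mathcal{S}'$ is larger than $A'$, so $A'$ cannot attach to any producible assembly and is terminal. Moreover any terminal $B' \ne A'$ would be a proper subassembly of $A'$; using connectivity of $G(A')$, some edge crosses from $B'$ to $A' \setminus B'$, exposing a single tile of $T'$ that bonds to $B'$ with strength $\ge 1 = \tau$, so $B'$ could grow, a contradiction. Hence $A'$ is the unique mismatch-free finite terminal assembly of $\mathcal{S}'$.

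Finally, because $G(A')$ is connected and at least one edge (namely $e \in E_{g,p}$) has been removed, the cycle rank of the bond graph strictly decreases at each application of the operation. Iterating therefore terminates after finitely many steps with a system whose unique mismatch-free finite terminal assembly has the shape of $A$ and a bond graph that is a tree, establishing the lemma.
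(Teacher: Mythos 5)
Your proposal is correct and takes essentially the same approach as the paper: repeatedly replace with the null glue all occurrences of a glue-side pair lying on a simple cycle, invoke Lemma~\ref{lem:no-repeated-glue-in-cycle} and Lemma~\ref{lem:all-occurrences-on-off-cycle} to show that removing every edge of that glue-side pair simultaneously cannot disconnect $G(A)$, and iterate until the bond graph is a tree. Your explicit rerouting argument and the relabelling map $\phi$ simply spell out details (connectivity under simultaneous deletion, and why producibles of $\mathcal{S}'$ remain subassemblies of $A'$) that the paper's proof states more tersely.
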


\begin{proof}
The system $\mathcal{S}'$ and assembly $A'$ are created by repeatedly breaking cycles in the bond graph of $A$.
A cycle is broken by removing an edge of the cycle, i.e. an occurrence of a glue-side pair.
The break is implemented in the tile set by replacing all occurrences of glues of the glue-side pair with the null glue.

Such an operation could potentially disconnect $G(A')$, causing $A'$ to no longer be 1-stable (and producible) assembly of $\mathcal{S}'$.
This could happen in two ways: either the glue-side pair occurs twice on a cycle, or the glue-side pair is a cut edge of the bond graph of $A'$ elsewhere.
By Lemma~\ref{lem:no-repeated-glue-in-cycle}, no glue-side pair appears twice in any cycle.
So the first possibility cannot occur.
By Lemma~\ref{lem:all-occurrences-on-off-cycle}, if a glue-side pair appears on a cycle, then it cannot be a single-edge cut anywhere in $A$. 
So removing all occurrences of a glues on the sides of the glue-side pair from $\mathcal{S}'$ does not cause $A'$ to become 1-unstable.
Moreover, removing glues cannot add to the set of producible assemblies of $\mathcal{S}'$, so $A'$ remains the unique terminal assembly of $\mathcal{S}$.

Returning to cycle breaking, each break decreases the number of edges in $G(A')$, so this process must terminate.
At termination, $G(A')$ has no cycles, and so is a tree.
Since we only removed glues from tiles of $\mathcal{S}$ to create $\mathcal{S}'$, the tile set is not larger.
However, the removal of glues may cause multiple tile in $\mathcal{S}$ to be indistinguishable in $\mathcal{S}'$.
So $|\mathcal{S}'| \leq |\mathcal{S}|$.
\end{proof}

\section{1-Occurrence Tile Types}
\label{sec:1-occurrence}

In addition to tree-ification, we also make use of the existence of \emph{1-occurrence tile types}: tile types that appear only once in the terminal assembly of the system.
These special tile types are utilized to modify the unique terminal assembly to contain unique tile types at two locations equally spaced along a traversal around the outside of the assembly.

\begin{lemma}
\label{lem:at-least-two-1-occurrence-tile-types}
Let $\mathcal{S} = (T, f, 1)$ be a 2HAM system with UMFTA $A$ with $G(A)$ a tree and $|A| \geq 2$.
Then $A$ has at least two 1-occurrence tiles.
\end{lemma}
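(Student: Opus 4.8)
The plan is to argue by contradiction: assuming $A$ has at most one $1$-occurrence tile, I will produce an infinite ascending family of producible assemblies, contradicting the fact that $\mathcal{S}$ has a \emph{finite} terminal assembly. If at most one tile type occurs once, then writing $m$ for the number of distinct tile types appearing in $A$, at least $m-1$ of them occur at least twice, so $|A| \ge 2m-1$; the assembly is thus large relative to its alphabet, and this scarcity of unique tiles is the pressure I want to convert into unbounded growth.

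The engine is the cycle-growing and cycle-pumping technique already developed in the proof of Lemma~\ref{lem:repeated-glue-side-pair-case-2}, together with the fact (Lemma~\ref{lem:1-stable-is-producible}) that every $1$-stable assembly over $T$ is producible. First I would fix a simple path $c_0, c_1, \dots, c_\ell$ in the tree $G(A)$ and record, for each step $c_i \to c_{i+1}$, its \emph{signature}: the ordered triple consisting of the tile type at $c_i$, the lattice direction from $c_i$ to $c_{i+1}$, and the tile type at $c_{i+1}$ (mismatch-freeness makes the shared glue a function of the type of $c_i$ and the direction). There are only finitely many signatures, bounded in terms of $m$, so along a sufficiently long path two steps $i<j$ must carry the same signature. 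Writing $\vec v = c_j - c_i$, the block of tiles strictly between the two occurrences can then be translated by $\vec v$ and reattached; by Lemma~\ref{lem:1-stable-is-producible} each resulting $1$-stable assembly is producible, and iterating the translation yields assemblies of unbounded size, exactly as in the cycle-pumping step of Lemma~\ref{lem:repeated-glue-side-pair-case-2}.

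Two gaps must be closed, and together they form the main obstacle. The first is geometric: a repeated signature does not by itself pump, because the translated block may \emph{collide} with tiles already present and merely be absorbed by a different copy of the same tiles inside the finite assembly $A$ (for instance, a single pendant tile may legitimately reappear at many leaves of $G(A)$ without creating any growth). I would handle collisions with the ``replace the blocking tile by the next tile along the path'' device from Lemma~\ref{lem:repeated-glue-side-pair-case-2}, but I must argue that the extracted repeat admits a forward extension re-exposing its defining glue into empty space, so that $\vec v \neq \vec 0$ and the growth is genuinely unbounded rather than blocked. The second gap is producing such a \emph{non-degenerate} repeat in the first place: the bound $|A|\ge 2m-1$ must be converted, using the tree structure, into the existence of a long simple path carrying a productive recurring signature, rather than a shallow bushy tree on which every forced repetition is benign. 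Establishing this---ruling out that the repetitions forced by the scarcity of $1$-occurrence tiles are all of the blocked, absorbable kind---is where I expect the real difficulty to lie, and is presumably where the hypotheses that $G(A)$ is a tree and that $A$ is the unique terminal assembly are used most heavily.
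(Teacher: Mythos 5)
Your proposal has a genuine gap, and it is the one you yourself flag at the end: you never show that scarcity of 1-occurrence tiles yields a ``productive'' repeat. In fact the situation is worse than you suspect, because the configuration your engine needs cannot exist under the lemma's hypotheses at all. A repeated signature at steps $i<j$ of a simple path gives two occurrences of the same tile type such that the unique tree path between them leaves the first occurrence via the side in direction $d$ (toward $c_{i+1}$) but enters the second occurrence via a side other than $d$ (the path is simple, so $c_{j-1} \neq c_{j+1}$ and it cannot enter $c_j$ from direction $d$). That is exactly a pair of occurrences whose connecting path uses two \emph{different} glue-sides, which Lemma~\ref{lem:two-occurrences-same-glue-side} forbids in every UMFTA with tree bond graph --- independently of how many 1-occurrence tiles exist. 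So under the hypotheses of Lemma~\ref{lem:at-least-two-1-occurrence-tile-types}, no simple path carries a repeated signature, your pigeonhole has nothing to find, and translation pumping can never fire. The repetitions that genuinely occur in such systems are all of the ``benign'' same-glue-side kind (e.g., the two identical vertical arms of a U-shaped assembly); these are fully consistent with finiteness and uniqueness, and the only growth they support is a one-shot mirror duplication (the device used in Lemma~\ref{lem:1-occurrence-path-wherever}), which contradicts uniqueness only when it duplicates a tile type beyond its occurrence count in $A$ --- precisely what the assumption ``every type but one occurs at least twice'' fails to provide, since every repeated type can afford a duplicate. Your counting bound $|A| \geq 2m-1$ also cannot supply a long path: the tree may be shallow and bushy, with all repetitions in parallel branches.

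The paper's proof avoids pumping entirely and instead inducts on $|T|$. In the tree $G(A)$, either two leaf tiles are 1-occurrence tiles (done), or some leaf type $t$ occurs $k \geq 2$ times; mismatch-freeness and terminality then force $t$ to have a single non-null glue that appears on no other tile's corresponding side (otherwise swapping tiles would produce a second terminal assembly of the same size, contradicting uniqueness), so deleting $t$ from the tile set yields a strictly smaller system whose UMFTA is $A$ with all occurrences of $t$ removed and with the same number of 1-occurrence tiles; induction finishes from the base case in which every tile in $A$ is a 1-occurrence tile. Uniqueness and terminality enter through this removal step rather than through any growth argument --- this is the mechanism your proposal is missing, and the one that pumping alone cannot replace.
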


\begin{proof}
We prove the result by induction on $|T|$, and assume that each tile in $T$ has at least once occurrence in $A$.
We use $|T| = 2$ as a base case, since if $|A| \geq 2$, then $|T| \geq 2$. 
Since $|A| = |T|$, both tiles in $A$ are 1-occurrence tiles and the claim holds.

For the inductive step, we examine the leaf tiles of $A$ and either find two 1-occurrence tiles or create a new system $\mathcal{S}' = (T', f', 1)$ with a UMFTA $A'$ with $G(A')$ a tree, $|T'| < |T|$, and the same number of 1-occurrence tiles as in $A$.

Recall that any tree with at least two nodes, including $G(A)$, has at least two leaves.
If two leaf tiles are 1-occurrence tiles, then the claim holds.
Otherwise at least one leaf tile $t$ is a $k$-occurrence tile with $k \geq 2$.

Since $A$ is mismatch-free, $t$ must have only one non-null glue $g$ and all occurrences of $t$ are leaves of $G(A)$.
Without loss of generality, assume $g = g_{\dirN}(t)$, i.e. $g$ is on the north side of $t$.
The glue $g$ is not the north glue of any other tile in $T$, as otherwise $t$ could be replaced with this other tile to yield a producible assembly with size $|A|$ not equal to $A$, a contradiction.
Moreover, since all non-north glues of $t$ are null, the tiles to the east, south, and west of each occurrence of $t$ must have the null glue on their west, north, and east sides, respectively.
So removing all occurrences of $t$ from $A$ yields a producible assembly $A'$ for the system $\mathcal{S}' = (T - \{t\}, f, 1)$.

Since all occurrences of $t$ in any producible assembly are leaves, removing all occurrences of $t$ from an assembly yields another 1-stable assembly.
So an assembly $A_{\mathcal{S}}$ is a producible assembly of $\mathcal{S}$ if and only if $A_{\mathcal{S}}$ with all occurrences of $t$ removed is a producible assembly of $\mathcal{S}'$.
So $A'$ must be the UTA of $\mathcal{S}'$. 
Also, $A'$ has the same number of 1-occurrence tiles as $A$.
So by induction, the mismatch-free unique terminal assembly $A'$ of $\mathcal{S}'$ has at least two 1-occurrence tiles and so does $A$.
\end{proof}

\begin{lemma}
\label{lem:1occurrence-tiles-1stable}
Let $\mathcal{S} = (T, f, 1)$ be a 2HAM system with UMFTA $A$.
Then the 1-occurrence tiles in $A$ form a 1-stable subassembly of $A$.
\end{lemma}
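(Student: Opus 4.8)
The plan is to prove the contrapositive by a \emph{duplication} argument: if the $1$-occurrence tiles fail to form a $1$-stable subassembly, I will manufacture a producible assembly in which some $1$-occurrence tile appears twice, contradicting that $A$ is the unique finite terminal assembly. First I would record the temperature-$1$ reformulation: a subconfiguration $B \subseteq A$ is $1$-stable exactly when the subgraph of $G(A)$ induced by ${\rm dom}(B)$ is connected, so it suffices to show that the set $U$ of locations carrying $1$-occurrence tiles induces a connected subgraph of $G(A)$. Assume instead that $U$ is disconnected. Since $A$ is connected, I would take a shortest path in $G(A)$ between two distinct components of the induced subgraph on $U$; its endpoints are $1$-occurrence tiles and all interior vertices are $k$-occurrence tiles with $k \geq 2$. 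Writing this path as $v_0, v_1, \dots, v_m$ with $m \geq 2$, the tile $u := v_0$ is a $1$-occurrence tile bonded to the $\geq 2$-occurrence tile $x := v_1$, and crucially $x$ has degree at least $2$ in $G(A)$ (being bonded to both $v_0$ and $v_2$), hence has at least two non-null glues. Let the bond $u$--$x$ carry glue-side pair $(g, \{D, D^{-1}\})$ with $u$ on the $D^{-1}$-side, and let $h$ be the glue on the second bonding side $E \neq D^{-1}$ of $x$.

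Next I would analyze the occurrences of $x$. Because $A$ is terminal and mismatch-free, at every occurrence of $x$ the $D^{-1}$-neighbor and the $E$-neighbor are both present and expose the glues $g$ and $h$, respectively. If the $D^{-1}$-neighbor were $u$ at every occurrence of $x$, then---$x$ being a $k$-occurrence tile with $k \geq 2$---the tile $u$ would occur at $k \geq 2$ distinct locations, contradicting that $u$ is a $1$-occurrence tile. Hence some occurrence of $x$, at a location $p'$, has $D^{-1}$-neighbor $s \neq u$, and both $u$ and $s$ carry glue $g$ on side $D$. The goal is now to exhibit a producible assembly containing two copies of $u$: I would use Lemma~\ref{lem:1-stable-is-producible} to assemble a single connected (hence producible) subassembly holding two occurrences of $x$ with their $D^{-1}$-sides exposed---obtained from $A$ by deleting the $D^{-1}$-neighbors of the relevant $x$-occurrences, where each such occurrence survives with its $E$-bond intact precisely because $x$ has degree at least $2$---and then dock a copy of $u$ onto each exposed $D^{-1}$-side. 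The resulting assembly is $1$-stable and places $u$ at two locations, so it is not a subassembly of $A$; as in the proof of Lemma~\ref{lem:repeated-glue-side-pair-case-2}, a finite producible assembly that is not a subassembly of $A$ either is, or extends to, a terminal assembly distinct from $A$, or else witnesses arbitrarily large producibles, in every case contradicting that $A$ is the unique finite terminal assembly.

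The hard part will be the connectivity bookkeeping in the duplication step: after deleting the $D^{-1}$-neighbors I must keep the two chosen occurrences of $x$ in a single connected component (so that both docked copies of $u$ belong to one assembly) and I must ensure the docked tiles create no collisions. The degree-at-least-$2$ property of the separating tile $x$ guarantees that each occurrence retains a bond, but not by itself that the two occurrences remain mutually connected; the delicate case is when freeing a docking site threatens to detach the second occurrence from the first. I expect to resolve this exactly as in the proofs of Lemmas~\ref{lem:repeated-glue-side-pair-case-2} and~\ref{lem:all-occurrences-on-off-cycle}: grow the second occurrence of $x$ along a simple path of $A$ and, whenever the next tile along that path is blocked by an existing tile, replace the blocking tile with the next tile along the path, so that a connected assembly carrying both docking sites is produced without overlap. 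This blocking-tile replacement, justified throughout by Lemma~\ref{lem:1-stable-is-producible}, is what I anticipate will convert the local separator structure into a genuine, single-assembly duplication of $u$.
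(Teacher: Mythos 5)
Your overall strategy is the same as the paper's: take a closest pair of 1-occurrence tiles lying in different components of the induced subgraph, note that the interior vertices of the connecting path are $k$-occurrence tiles with $k \geq 2$, and derive a contradiction by exhibiting a single producible assembly (via Lemma~\ref{lem:1-stable-is-producible}) that contains two copies of a 1-occurrence tile. Your intermediate observations are also sound: terminality forces every occurrence of $x$ to have its $D^{-1}$-neighbor present, mismatch-freeness forces that neighbor to carry glue $g$, and since $u$ occurs once, some occurrence of $x$ at $p'$ has a $D^{-1}$-neighbor $s \neq u$. The genuine gap is exactly the step you defer to ``connectivity bookkeeping,'' and the repair you propose does not work. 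Deleting $s$ (and $u$) from $A$ can disconnect it with the two occurrences of $x$ in \emph{different} components---nothing prevents $s$ from being a cut vertex of $G(A)$ separating $p'$ from $v_1$, and indeed $s$ may sit on the very path joining the two occurrences---so each docked copy of $u$ then lands in a different producible assembly, contradicting nothing. The blocking-tile replacement you cite from Lemmas~\ref{lem:repeated-glue-side-pair-case-2} and~\ref{lem:all-occurrences-on-off-cycle} is not available here: in Lemma~\ref{lem:repeated-glue-side-pair-case-2} the replacement is legitimate only because every vertex of the assembly being grown lies on a cycle, so removing one tile cannot disconnect it; the assemblies in your argument are unions of simple paths (trees), and removing the tile occupying a docking site from a tree does disconnect it in general.

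The paper closes this hole with two ingredients that are absent from your sketch. First, its starting assembly $B$ contains the \emph{entire} path between \emph{both} 1-occurrence endpoints $t_1$ and $t_2$, together with the shortest path between the two occurrences of the multi-occurrence tile $t_4$ (your $x$); your assembly omits the far endpoint, which is needed below. Second, when the docking site beside the second occurrence of $t_4$ is occupied by a tile $t_5$ (your $s$), the paper does not try to remove it immediately. Instead it grows, from the second occurrence of $t_4$, a translated replica of the path from $t_4$ back to $t_1$. If this replica completes, the assembly contains two copies of the \emph{other} 1-occurrence tile $t_1$---the contradiction, and the reason $B$ must include that endpoint. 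If the replica is blocked, mismatch-freeness forces a positive-strength bond between the last placed tile and the blocking tile, and this new bond closes a cycle through $t_5$; only this freshly created cycle makes removing $t_5$ safe, after which $t_2$ can be docked in its place to give two copies of $t_2$. Without the fallback duplication of the far endpoint, and without the cycle manufactured by the blocked replica to license the removal, the duplication you aim for cannot be completed, so the proposal as written does not constitute a proof.
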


\begin{proof}
Suppose by contradiction that the 1-occurrence tiles in $A$ do not form a 1-stable subassembly.
Then let $t_1$ and $t_2$ be a pair of 1-occurrence tiles such that $t_1$ and $t_2$ are not connected with a path of 1-occurrence tiles in $G(A)$ and the shortest path between $t_1$ and $t_2$ is as short as possible.
In other words, $t_1$ and $t_2$ are the closest pair of 1-occurrence tiles that are not part of a 1-stable subassembly of $A$ consisting only of 1-occurrence tiles. 

Let $t_3$ and $t_4$ be the second and second-to-last tiles encountered along the shortest path in $G(A)$ from $t_1$ to $t_2$.
By definition, $t_3$ and $t_4$ are not 1-occurrence tiles.
Without loss of generality, assume the glue-side pair between $t_1$ and $t_3$ is $(1, \{\dirW, \dirE\})$, with $t_1$ west of $t_3$, and the glue-side pair between $t_4$ and $t_2$ is $(2, \{\dirW, \dirE\})$, with $t_4$ west of $t_2$.
Let $B$ be the 1-stable subassembly of $A$ consisting of the aforementioned path between $t_1$ and $t_2$ (including an occurrence $t_4$) and the shortest path in $G(A)$ between the two occurrences of $t_4$.
The remainder of the proof is a case analysis of $B$, proving that the existence $B$ implies another 1-stable assembly with two occurrences of $t_1$ or $t_2$.
The three cases are seen in Figure~\ref{fig:single-occurrence-tiles-connected}.

\begin{figure}[ht]
\centering
\makebox[\textwidth][c]{\includegraphics[scale=1.0]{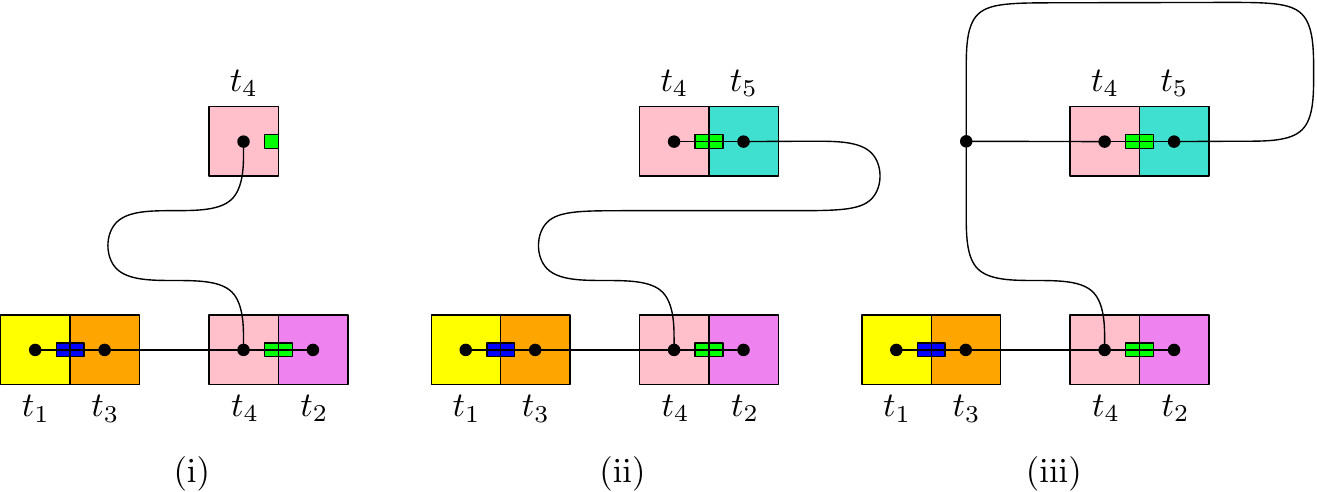}}
\caption{The three cases for the closest pair of 1-occurrence tiles ($t_1$ and $t_2$) in $A$ not in the same connected component of 1-occurrence tiles in $G(A)$.
The shortest path between two occurrences of $t_4$ is used to create a 1-stable assembly containing two occurrences of $t_1$ or $t_2$.}
\label{fig:single-occurrence-tiles-connected}
\end{figure}

\textbf{Case~(i).} Suppose that the path between the two occurrences of $t_4$ does not contain a tile in the location east of the second occurrence of $t_4$.
Attaching $t_2$ at this location yields a 1-stable assembly containing two occurrences of $t_2$.

\textbf{Case~(ii).} Now suppose that the path between the two occurrences of $t_4$ contains a tile $t_5$ in the location east of the second occurrence of $t_4$.
Begin attaching tiles as they appear along the path between $t_2$ and $t_1$, starting at $t_4$.
If all tiles along the path can be attached, then the result is a 1-stable assembly containing two occurrences of $t_1$.

\textbf{Case~(iii).} Assume the situation is identical to case~(ii), but some tile placement along the path from $t_2$ to $t_1$ is blocked by an existing tile in the assembly.
Since the assembly is producible, it must be a subassembly of $A$ and thus mismatch-free.
So the previous tile along the path and the blocking tile must share a glue-side pair.
Moreover, this glue must have positive strength since the glue was sufficient for the next tile along the path from $t_2$ to $t_1$ to attach.
So a cycle containing adjacent tiles $t_4$ (the second occurrence) and $t_5$ is formed by a portion of the path between the two occurrences of $t_4$ and the partial path from $t_2$ to $t_1$.
So removing $t_5$ yields a 1-stable assembly containing both $t_2$ and an occurrence of $t_4$ with no tile to the east, and replacing $t_5$ with $t_2$ yields a 1-stable assembly containing two occurrences of $t_2$.

So in all three cases, a 1-stable and thus producible (by Lemma~\ref{lem:1-stable-is-producible}) assembly is created with two occurrences of a tile that appears only once in $A$.
So $A$ cannot be the unique terminal assembly of $\mathcal{S}$, a contradiction.
\end{proof}

\begin{lemma}
\label{lem:1-occurrence-glues-unique}
Let $\mathcal{S} = (T, f, 1)$ be a 2HAM system with UTA $A$. 
For any glue-side pair $(g, p)$ occurring between a pair of 1-occurrence tiles in $A$, $(g, p)$ occurs only once in $A$. 
\end{lemma}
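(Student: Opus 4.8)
The plan is to argue by contradiction using the forcing technique from the proof of Lemma~\ref{lem:repeated-glue-side-pair-case-1}. Suppose the glue-side pair $(g,p)$ occurring between the pair of 1-occurrence tiles occurs at least twice in $A$. Without loss of generality take $p = \{\dirE, \dirW\}$ and let the two occurrences be a first edge between $t_1$ (west) and $t_2$ (east) and a second, distinct edge between $t_3$ (west) and $t_4$ (east), where $t_1$ and $t_2$ are the given 1-occurrence tiles. Then $g_{\dirE}(t_1) = g_{\dirW}(t_2) = g_{\dirE}(t_3) = g_{\dirW}(t_4) = g$. The first step I would carry out is to record the consequence of the 1-occurrence hypothesis: $t_4 \neq t_2$ and $t_3 \neq t_1$. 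Indeed, if $t_4 = t_2$, then because $t_2$ occupies a single location of $A$, the east endpoints of the two occurrences coincide; since both are east-west edges entering that location from the west, the two occurrences would be one and the same edge, contradicting distinctness. The argument for $t_3 \neq t_1$ is symmetric. This is precisely where the hypothesis that $t_1$ and $t_2$ each occur only once is used.

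Next I would form the forced seeded system exactly as in Lemma~\ref{lem:repeated-glue-side-pair-case-1}: take the seeded version of $\mathcal{S}$ restricted so that any producible assembly containing $t_1$ with no tile east of $t_1$ must immediately attach $t_4$ east of $t_1$, and any producible assembly containing $t_2$ with no tile west of $t_2$ must immediately attach $t_3$ west of $t_2$. Both attachments are legal at temperature~1, as each forms the single matching bond of glue $g$. The two rules act in concert: the first prevents $t_2$ from ever being placed east of $t_1$, and the second prevents $t_1$ from ever being placed west of $t_2$, so no producible assembly of the restricted system contains $t_1$ immediately west of $t_2$ (here $t_4 \neq t_2$ and $t_3 \neq t_1$ are exactly what guarantee the forced tiles are not the ones we are trying to exclude). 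By the same reordering argument as in Lemma~\ref{lem:repeated-glue-side-pair-case-1}, the restricted system has the same set of terminal assemblies as $\mathcal{S}$, and in particular has $A$ as its unique terminal assembly. But $A$ contains $t_1$ immediately west of $t_2$ (the first occurrence), so $A$ is not a producible assembly of the restricted system, contradicting that $A$ is its terminal assembly.

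I expect the delicate point to be the same one that powers Lemma~\ref{lem:repeated-glue-side-pair-case-1}: verifying that substituting the forced attachments at the exposed glue $g$ for the normal ones does not alter the set of terminal assemblies. The subtlety is that $t_3$ and $t_4$ are not assumed to lie on a common cycle with $t_1$ and $t_2$, so one has to confirm that the two forced rules merely reorder an otherwise legal temperature-1 assembly process, neither introducing a new terminal assembly nor losing the old one, rather than, say, driving unbounded growth. Modulo this inherited principle, the only genuinely new ingredient is the opening observation that 1-occurrence of $t_1$ and $t_2$ forces $t_3 \neq t_1$ and $t_4 \neq t_2$, which is what makes the forced swap actually destroy the $t_1$--$t_2$ adjacency present in $A$.
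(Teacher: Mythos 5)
Your proposal is correct and takes essentially the same route as the paper: the paper's own proof likewise constructs a restricted seeded system in the style of Lemma~\ref{lem:repeated-glue-side-pair-case-1} and rests on the same (only informally justified) claim that forcing immediate attachments preserves the set of terminal assemblies, which you rightly flag as the delicate inherited point. The only difference is cosmetic --- the paper argues directly, forcing $t_1$ and $t_2$ onto \emph{every} exposed glue $g$ so that each occurrence of $(g,p)$ must contain $t_1$ or $t_2$, with the 1-occurrence hypothesis then pinning $(g,p)$ to a single edge, whereas you argue by contradiction, forcing $t_3$ and $t_4$ onto the exposed sides of $t_1$ and $t_2$; your opening observation that $t_3 \neq t_1$ and $t_4 \neq t_2$ is exactly the paper's closing use of the 1-occurrence hypothesis, relocated to the front.
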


\begin{proof}
Without loss of generality, let $(g, p) = (g, \{\dirE, \dirW\})$ and $t_1$ and $t_2$ be the pair of 1-occurrence tiles with the glue-side pair occurrence with $t_1$ west of $t_2$.
Consider the seeded version of $\mathcal{S}$ additionally restricted in two ways: any time a tile attaches to a producible assembly $P$ with exposed east glue $g$, $t_2$ must immediately attach to $P$ via this glue; similarly, any time a tile attaches to $P$ with exposed west glue $g$, $t_1$ must immediately attach to $P$.
So every occurrence of the glue-side pair $(g, p)$ contains $t_1$ or $t_2$.
Then since $A$ is a producible assembly of $\mathcal{S}$ and $t_1$ and $t_2$ are 1-occurrence tiles, $(g, p)$ occurs only once in $A$. 
\end{proof}

\begin{lemma}
\label{lem:two-occurrences-same-glue-side} 
Let $\mathcal{S} = (T, f, 1)$ be a 2HAM system with UMFTA $A$ with $G(A)$ a tree. 
For any tile $t \in T$, the simple path in $G(A)$ between any two occurrences of $t$ uses the same glue-side of $t$ on both occurrences.
\end{lemma}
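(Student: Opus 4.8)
The plan is to argue by contradiction using a pumping construction tailored to the tree structure of $G(A)$. Fix a tile $t$ occurring at two locations $o_1$ and $o_2$, and let $P$ be the unique simple path between them in the tree $G(A)$. Let $(g_1, D_1)$ be the glue-side of $t$ used by the first edge of $P$ at $o_1$, and $(g_2, D_2)$ the glue-side used by the last edge of $P$ at $o_2$; since both endpoints are copies of the same tile $t$, proving ``same glue-side'' is exactly proving $D_1 = D_2$. Throughout I would use the fact, exactly as in the proof of Lemma~\ref{lem:1occurrence-tiles-1stable}, that every producible assembly of $\mathcal{S}$ is a subassembly of $A$; consequently every producible assembly is mismatch-free, has at most $|A|$ tiles, and has an acyclic bond graph, since its bond graph is a subgraph of the tree $G(A)$.

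Now suppose for contradiction that $D_1 \neq D_2$. I would first take $C$ to be the subassembly of $A$ consisting of the tiles of $P$; being connected it is $1$-stable, hence producible by Lemma~\ref{lem:1-stable-is-producible}. Let $v$ be the translation carrying $o_1$ onto $o_2$. Because $D_1 \neq D_2$, the side $D_1$ of the second occurrence $o_2$ is \emph{not} the side used by $P$, so within $C$ it is unoccupied and exposes the glue $g_1$. I would then grow a translated copy $P + v$ of the path onto this free side, attaching its tiles one at a time in path order, beginning with the copy of the first interior tile of $P$ placed on side $D_1$ of $o_2$. Each attachment exactly replicates a bond of $P$ (the copy of $o_1$ coincides with $o_2$, which carries the same glue $g_1$ on side $D_1$ as $o_1$ does), so every glue matches and every intermediate assembly is $1$-stable and therefore producible.

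The crux is to rule out that the growing copy collides with the existing tiles, and this is where the tree/mismatch-free structure does the work. At each single-tile step the target location is either free or already occupied. It cannot be occupied: the tile just placed faces the occupant across a non-null path glue, so by mismatch-freeness of the (producible) assembly the occupant must present the matching glue, forming a positive-strength bond between two tiles already joined in the connected assembly and thereby closing a cycle in the bond graph, which is impossible for an acyclic producible assembly. Hence every attachment finds a free location, each copy of $P$ completes at a fresh occurrence of $t$ whose side $D_1$ is again exposed, and the process repeats indefinitely, yielding producible assemblies with arbitrarily many tiles. This contradicts the fact that every producible assembly is a subassembly of the finite assembly $A$. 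Therefore $D_1 = D_2$, and the two occurrences use the same glue-side. The main obstacle, resolved by the dichotomy above, is precisely the possibility that the pumped copy runs into the original assembly: the tree bond graph and mismatch-freeness of $A$ force any such collision to manifest as either a forbidden cycle or a forbidden mismatch.
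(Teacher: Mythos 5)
Your proof is correct and takes essentially the same route as the paper's: both arguments pump a translated copy of the path out of the side exposed at the second occurrence (which is free precisely because the two glue-sides differ), use the fact that every producible assembly is a subassembly of the mismatch-free, tree-bond-graph assembly $A$ to rule out blocking, and contradict the finiteness of $A$ with an unboundedly growing producible assembly. Your collision dichotomy (an occupant would be forced by mismatch-freeness to bond and thereby close a cycle) simply spells out in detail the step the paper states tersely as ``the repeated sequence of attachments cannot be blocked.''
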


\begin{proof}
Suppose, by contradiction, that there exists a simple path in $G(A)$ between two occurrences of a tile $t$ such that the path uses, without loss of generality, the north glue-side of the first occurrence and the east glue-side of the second occurrence.
Start with the 1-stable assembly formed by the path, including both occurrences of $t$.
Carry out a sequence of tile attachments from the second occurrence of $t$ as they appear along the path from the first to the second occurrence of $t_1$, including the final attachment of $t$.
Now repeat this sequence of attachments indefinitely, starting from the occurrence of $t$ just attached, to create an assembly $B$.

The assembly $B$ is 1-stable and thus producible (by Lemma~\ref{lem:1-stable-is-producible}), and so $B$ is a subassembly of $A$.
Since $A$ is mismatch-free and $G(A)$ is a tree, $B$ is mismatch-free and $G(B)$ is a tree.
So the repeated sequence of attachments cannot be blocked by an existing tile in the assembly and $B$ is an infinite assembly.
So $A$ is not the unique terminal assembly of $\mathcal{S}$, a contradiction.
\end{proof}

\begin{lemma}
\label{lem:1-occurrence-path-wherever}
Let $\mathcal{S} = (T, f, 1)$ be a 2HAM system with UMFTA $A$ with $G(A)$ a tree.
Let edges $e, e' \in G(A)$, with $e'$ between a pair of 1-occurrence tiles.
Then there exists a second 2HAM system $\mathcal{S}' = (T', f', 1)$ with $|T'| \leq 2|T|$ and UMFTA $A'$ with $G(A') = G(A)$ and the unique path from $e'$ to $e$ in $G(A')$ consisting entirely of 1-occurrence tiles in $A'$. 
\end{lemma}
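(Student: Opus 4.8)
The plan is to relabel the tiles along the unique tree path $P = (w_0, w_1, \dots, w_\ell)$ from $e'$ to $e$, where $e' = \{w_0, w_1\}$ and both $w_0, w_1$ are $1$-occurrence tiles, so that each tile of $P$ becomes its own $1$-occurrence tile while the shape and bond graph are unchanged. For every position of $P$ I would introduce a fresh \emph{primed} tile type $w_i'$ that agrees with $w_i$ on every side pointing off $P$ (so it still bonds to exactly the same branch neighbours as $w_i$ does in $A$) but carries a fresh glue on each side pointing along $P$, with consecutive primed tiles $w_i', w_{i+1}'$ sharing a fresh glue that occurs nowhere else in the system. The assembly $A'$ is then $A$ with each path position occupied by the corresponding primed tile; by construction it is $1$-stable, mismatch-free, and satisfies $G(A') = G(A)$. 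To \emph{anchor} the primed chain I would delete from $T'$ the originals of the $1$-occurrence tiles on $P$ (in particular $w_0$ and $w_1$), keeping the original only for path tiles occurring $k \ge 2$ times in $A$, since those are still needed for their off-path occurrences.

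The step I expect to be the main obstacle is proving that $A'$ is the \emph{unique} finite mismatch-free terminal assembly of $\mathcal{S}'$, i.e.\ that the extra primed tiles create no alternative terminal assemblies. For this I would use the projection $\pi\colon T' \to T$ that un-primes tiles: every fresh-glue bond un-projects to the corresponding original path bond of $A$, so $\pi$ carries each producible assembly of $\mathcal{S}'$ to a $1$-stable, hence producible (Lemma~\ref{lem:1-stable-is-producible}), assembly of $\mathcal{S}$, which must be a subassembly of the UMFTA $A$. This already bounds every producible of $\mathcal{S}'$ by $|A|$ tiles and makes $A'$ terminal, since any attachment would un-project to a proper superassembly of $A$. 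To pin down uniqueness I would exploit that $G(A)$ is a tree, so every edge is a bridge: a terminal assembly must realise every bond of $A$ (a missing bridge would disconnect it) and therefore occupy the full shape of $A$. The deleted originals force $w_0$ and $w_1$ to be primed, and the fresh glue exposed by $w_i'$ then forces the next path position to hold $w_{i+1}'$ as well — an original tile there would mismatch the fresh glue and sever a bridge, and a primed tile placed off $P$ would either sever a bridge or expose a fresh glue that recruits its chain-neighbour as a single tile, contradicting terminality. Iterating along $P$ identifies the terminal assembly as exactly $A'$, with each primed tile occurring once; mismatch-freeness is immediate from the construction.

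The other ingredient is the size bound $|T'| \le 2|T|$. Since I introduce one primed type per position of $P$, it suffices to show that a simple path in $G(A)$ visits at most $|T|$ tiles, i.e.\ that no tile type repeats on a simple path. I would derive this from Lemma~\ref{lem:two-occurrences-same-glue-side}: two occurrences of a type $t$ at the ends of a sub-path must use the same glue-side, say east, so their eastern neighbours share a west glue $g$. Applying the forced-attachment seeded construction of Lemmas~\ref{lem:repeated-glue-side-pair-case-1} and~\ref{lem:1-occurrence-glues-unique} — whenever an exposed east glue $g$ appears, immediately attach the eastern neighbour of the first occurrence — would place that neighbour east of the second occurrence too, so the resulting terminal assembly differs from $A$ unless the two eastern neighbours have the same type. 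Peeling inward pairs the path symmetrically by type, and the innermost step is contradictory for parity reasons (two adjacent equal-type tiles would require a glue-side to equal its opposite, and two equal-type tiles flanking one tile would force two distinct positions to coincide). Hence $P$ has at most $|T|$ tiles, and together with the at most $|T|$ retained originals this gives $|T'| \le 2|T|$.
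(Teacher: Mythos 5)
Your construction and your uniqueness argument are essentially the paper's: the paper likewise replaces the glues along the path from $e'$ to $e$ with unique glues (so the path tiles become new types, originals surviving only where needed for off-path occurrences), and it likewise argues uniqueness by swapping the new tiles back to the originals to land in a producible assembly of $\mathcal{S}$, with the $1$-occurrence tiles at $e'$ anchoring the primed path. Your sketch of that part is at roughly the same level of detail as the paper's, so the decisive comparison is elsewhere.

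The genuine gap is in your size bound. You reduce $|T'| \le 2|T|$ to the claim that \emph{no tile type repeats on any simple path of $G(A)$}, and that claim is false. Counterexample: take tiles $x_0, x_1, x_2, u$ where $x_0$ and $x_1$ share an east--west glue $1$, $x_1$ and $x_2$ share an east--west glue $2$, both $x_0$ and $x_2$ carry glue $5$ on their south sides, and $u$ has glue $5$ on its north side and no other glues. The UMFTA is a three-tile backbone $x_0 x_1 x_2$ with one copy of $u$ below $x_0$ and another below $x_2$; it is mismatch-free, finite, its bond graph is a tree, and $x_0, x_1, x_2$ are $1$-occurrence tiles, so all hypotheses of the lemma hold. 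Yet the simple path $u, x_0, x_1, x_2, u$ visits five tiles while $|T| = 4$. This example also pinpoints where your peeling step breaks: the two occurrences of $u$ do use the same glue-side (north), consistent with Lemma~\ref{lem:two-occurrences-same-glue-side}, but their north neighbours are $x_0 \neq x_2$. Your forced-attachment rule (``whenever an exposed north glue $5$ appears, immediately attach $x_0$'') simply never fires at the second occurrence, because the north glue of $u$ is its only glue and is consumed by the very attachment that places $u$; in general you cannot guarantee that the repeated type's relevant glue-side is ever left exposed during assembly, nor that the resulting producible assembly embeds in $A$ at the identity translation rather than at a translated copy. What rescues the lemma --- and what the paper's proof uses but yours never does --- is that the path in question is anchored at $e'$, an edge between two $1$-occurrence tiles: assuming a repeated type $t$ on the path, the paper grows the path from $e'$ through the first occurrence of $t$ to the second occurrence, and then replays from the second occurrence, in reverse, the sequence of placements leading from the first occurrence back to $e'$ (legal since both occurrences have identical glues and, by Lemma~\ref{lem:two-occurrences-same-glue-side}, compatible entry sides). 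The result is a producible assembly containing two copies of the $1$-occurrence tiles incident to $e'$, the desired contradiction. Distinctness is a property of paths anchored at $e'$, not of simple paths in general, so any correct proof must use that anchor in an essential way; your argument, as written, does not.
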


\begin{proof}
The construction of $\mathcal{S}'$ is simple.
Note that $e'$ has a unique glue-side pair by Lemma~\ref{lem:1-occurrence-glues-unique}.
Find the shortest path in $G(A)$ from $e'$ to $e$.
Replace all glues along this path, including $e$, with unique glues, letting $T'$ and $f'$ be the updated tile set and glue function. 

We claim that the resulting mismatch-free assembly $A'$ with $G(A')$ a tree is the unique terminal assembly of $\mathcal{S}'$.
Certainly $A'$ is a terminal assembly of $\mathcal{S}'$, as otherwise $A$ was not a terminal assembly of $\mathcal{S}$.

For any terminal assembly of $\mathcal{S}'$, all occurrences of newly created tiles in $T' - T$ can be swapped with the tiles in $T$ they replaced to yield $A$.
Then since $A'$ is mismatch-free with $G(A')$ a tree and the glues along the path from $e'$ to $e$ in $G(A')$ are unique, any appearance of the tiles along this path in a terminal assembly of $\mathcal{S}'$ must lie along a complete path from $e'$ to $e$.
Otherwise replacing the newly created tiles yields a terminal assembly of $\mathcal{S}$ with a mismatch or a cycle.
In summary, $A'$ must be the unique terminal assembly of $\mathcal{S}'$, as every terminal assembly must have a complete path from $e'$ to $e$ for every occurrence of a tile on the path, swapping all newly created tiles with the tiles they replaced must yield $A$, and the tiles incident to $e'$ are 1-occurrence tiles in $A$.

We also claim that $|T'| \leq 2|T|$ by now proving that the path of tiles between $e'$ and $e$ in $A$ are all distinct.
Suppose that two tiles along the path have the same type $t$.
By Lemma~\ref{lem:two-occurrences-same-glue-side}, the path between two occurrences of $t$ must enter both occurrences using the same glue-side.
So consider the seeded assembly process starting at the tiles incident to $e'$, growing along the path to the first occurrence of $t$, then to the second occurrence, and then growing from the second occurrence using the sequence of tile placements encountered when traveling from the first occurrence \emph{backwards} to $e'$.
Since $A$ is the mismatch-free unique terminal assembly of $\mathcal{S}$ and $G(A)$ is a tree, this sequence of tile placements must cannot be blocked or form a cycle.
So the resulting assembly, a subassembly of $A$, has two occurrences of the 1-occurrence tiles incident to $e'$, a contradiction.
Lemma~\ref{lem:1-occurrence-glues-unique} implies the final result that the edges of $G(A')$ corresponding to the cut are unique glue-side pairs. 
\end{proof}

\section{A Size-Separable Macrotile Construction}
\label{sec:macrotile-construction}

A simple barrier to general high-factor size-separability is the fact that any system with a tree-shaped unique terminal assembly $A$ cannot be factor-$c$ size-separable for any $c > 1 + 1/|A|$. 
A more subtle challenge is how to partition assemblies into equal-sized 1-stable halves that will come together in the final assembly step.
For instance, every 1-stable subassembly of the right assembly of Figure~\ref{fig:simple-examples} has size at most one-third the size of the total assembly.

We resolve both of these issues by creating a temperature-2 2HAM system with a unique terminal assembly whose shape is the shape of $A$ scaled by a factor of~2, and whose bond graph has an edge cut of two temperature-1 bonds that partitions $G(A)$ into two subgraphs of equal size. 

\begin{lemma}
\label{lem:subassembly-of-producible-is-not-terminal}
Let $\mathcal{S} = (T, f, \tau)$ be a 2HAM system with $P$ and $P'$ producible assemblies of $\mathcal{S}$ with $P$ a proper subassembly of $P'$.
Then $P$ is not a terminal assembly.
\end{lemma}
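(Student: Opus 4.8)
The plan is to find a producible assembly that $P$ can combine with, thereby witnessing that $P$ is not terminal. First I would position $P$ inside $P'$ so that $P = P'|_S$ with $S = {\rm dom}(P) \subsetneq {\rm dom}(P')$, and set $R = {\rm dom}(P') \setminus S \neq \emptyset$ (this nonemptiness is exactly properness). The natural source of a partner for $P$ is a piece of $R$: concretely, I aim to produce a \emph{producible} subassembly $T = P'|_D$ with $D \subseteq R$ whose total bond strength to $S$ is at least $\tau$. Given such a $T$, both $P$ and $T$ are $\tau$-stable with interface $\geq \tau$, so their union $P'|_{S \cup D}$ is $\tau$-stable; hence $P$ assembles with $T$ and is not terminal.

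The key device I would introduce is a stability observation about bond-components of $R$. For any connected component $C$ of the bond graph $G(P')|_R$, the cut $(C, {\rm dom}(P') \setminus C)$ crosses no $C$--$(R \setminus C)$ bonds, so by $\tau$-stability of $P'$ the total strength between $C$ and $S$ is at least $\tau$. Using in addition that $P$ itself is $\tau$-stable, I would verify that $P'|_{S \cup C}$ is $\tau$-stable: any cut of it restricts to either a full cut of $P$ (hence $\geq \tau$), a cut of a subset of $C$ that is a genuine $P'$-cut (hence $\geq \tau$), or the $C$--$S$ interface (hence $\geq \tau$). For $\tau = 1$ this already closes the argument: a bond-component $C$ is $1$-stable, so producible by Lemma~\ref{lem:1-stable-is-producible}, and it attaches to $P$ with strength $\geq 1$, so $P$ is not terminal.

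For general $\tau$ a bond-component need not be producible, so I would instead induct on $|P'|$, peeling the final combination $P' = A \sqcup B$ into producible halves. If $S \subseteq {\rm dom}(A)$ then either $P = A$, which assembles with the producible $B$, or $P \subsetneq A$, and induction applies to the smaller producible $A$ (symmetrically for $B$). The main obstacle is the case where $S$ \emph{straddles} both halves, since then neither $A|_{S \cap A}$ nor $B|_{S \cap B}$ is known to be producible and the induction does not directly recurse. Here I would fall back on the stability observation, replacing $P'$ by the strictly smaller $\tau$-stable superassembly $P'|_{S \cup C}$ for a bond-component $C \subsetneq R$ and arguing its producibility by tracking the assembly tree of $P'$. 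Securing this producibility for $\tau \geq 2$, together with the residual case in which $R$ is bond-connected, is where the genuine difficulty lies.
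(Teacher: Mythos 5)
Your argument is complete only at $\tau = 1$; for $\tau \geq 2$ you explicitly leave open the cases that carry all of the difficulty, and those are exactly the cases the paper needs. The lemma is invoked in the proof of Lemma~\ref{lem:scaling} for the \emph{temperature-2} macrotile system (to conclude that a producible $A_p'$ which is a proper subassembly of $A'$ cannot be terminal), so a proof valid only at temperature~1 does not support the paper's application. Concretely, your induction does not recurse in the straddle case, and the fallback of replacing $P'$ by the $\tau$-stable assembly $P'|_{S \cup C}$ fails on two counts: when $R$ is bond-connected we have $C = R$, so $P'|_{S \cup C} = P'$ and there is no size decrease; and when $C \subsetneq R$, your induction hypothesis requires $P'|_{S \cup C}$ to be \emph{producible}, which nothing you have established implies. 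At $\tau \geq 2$ a $\tau$-stable subassembly of a producible assembly need not be producible (as you note yourself), and proving that this particular one is producible, given that $P$ and $P'$ are, is essentially the same confluence statement as the lemma itself --- so the argument becomes circular at its crux. Your temperature-1 case, by contrast, is correct: all of the $\geq \tau$ strength across the cut separating a bond-component $C$ of $R$ from the rest of $P'$ must lie on the $C$--$S$ interface, and at temperature~1 the bond-connected $C$ is 1-stable, hence producible by Lemma~\ref{lem:1-stable-is-producible}.

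The missing idea is order-independence of two-handed assembly, which the paper imports rather than proves: Doty~\cite{Doty-2013a} shows that the greedy process of starting with the tiles of $P'$ and repeatedly merging pairs of $\tau$-stable assemblies into larger $\tau$-stable assemblies always terminates in $P'$, regardless of the order in which merges are performed. Given this, the paper's proof is three lines: since $P$ is producible and uses a subset of the tiles of $P'$, one may first assemble $P$ and then continue the merging process until $P'$ is reached; because $P$ is a proper subassembly, $P$ must participate in some later merge with a producible assembly, so $P$ is not terminal. That theorem is precisely what resolves your straddle case --- it reconciles the merge tree producing $P$ with the one producing $P'$ --- and its proof is itself nontrivial, so I would not expect your induction to close without effectively reproducing it. Either cite it as the paper does, or scope your self-contained argument to $\tau = 1$, accepting that it then no longer covers the setting in which the lemma is actually used.
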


\begin{proof}
Doty~\cite{Doty-2013a} proves that the greedy process of starting with the tiles of $P'$ and repeatedly assembling pairs of $\tau$-stable assemblies into larger $\tau$-stable assemblies always suffices to yield $P'$, regardless of the order in which the assemblies are merged. 
Since $P$ is producible, there must also be such a sequence for $P$, and since $P$ is a subassembly of $P'$, any such sequence utilizes a subset of the tiles in $P'$.
Then $P'$ can be assembled by first assembling $P$, then continuing the process to form $P'$, so $P$ is not terminal.
\end{proof}

\begin{lemma}
\label{lem:scaling}
Let $\mathcal{S} = (T, f, 1)$ be a 2HAM system with UMFTA $A$ with $G(A)$ a tree.
Then there exists a 2HAM system $\mathcal{S}' = (T', f', 2)$ with UMFTA $A'$ and $|\mathcal{S}'| \leq 4|\mathcal{S}|$ such that $A'$ has the shape of $A$ scaled by a factor of~2.
\end{lemma}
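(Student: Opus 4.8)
The plan is to realize the factor-2 scaling with rigid $2 \times 2$ \emph{macrotiles} whose interior is held together at strength~2 and whose exterior merely repeats the glues of $\mathcal{S}$ at strength~1. For each tile $t=(g_n,g_e,g_s,g_w)\in T$ I would introduce four subtiles $t^{\mathrm{NW}},t^{\mathrm{NE}},t^{\mathrm{SW}},t^{\mathrm{SE}}$: the four interior edges of the block carry strength-2 glues that are \emph{unique} to $t$ (shared by no other macrotile and by no side of $\mathcal{S}$), while the two subtiles lying on a given side of the block both expose the corresponding glue of $t$ at strength~1 (so, e.g., $t^{\mathrm{NW}}$ and $t^{\mathrm{NE}}$ both carry $g_n$ to the north). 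Setting $\tau=2$ gives $|T'|\le 4|T|$, hence $|\mathcal{S}'|\le 4|\mathcal{S}|$. The candidate terminal assembly $A'$ is the image of $A$ under this substitution on the doubled lattice; it is finite because $A$ is, and it is mismatch-free because interior edges agree by construction and every edge of $A'$ crossing a block boundary lies over an edge (or a null/null adjacency) of the mismatch-free $A$.

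First I would verify that $A'$ is producible and terminal. Each block assembles on its own, since every subtile binds a block-mate across a single strength-2 interior glue, and the completed blocks then combine precisely along the tree-structured adjacencies of $A$, contributing two strength-1 bonds (total strength $\tau=2$) per shared side; mirroring a production order of $A$ thus produces $A'$. For terminality, observe that all interior glues of $A'$ are already saturated, so any attaching assembly must bond through boundary glues. But a boundary glue of a subtile is just some $g_D(t)$, and if it matched an exposed side of $A'$ it would exhibit a tile of $T$ matching an exposed side of $A$; since $A$ is terminal at temperature~1, where a single matching glue suffices to attach, no such tile exists. Hence no subtile makes even a single strength-1 contact with $A'$, which rules out every (possibly cooperative) strength-2 attachment, so $A'$ is terminal.

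The heart of the argument, and the step I expect to be the main obstacle, is uniqueness. I would show that any terminal assembly $B'$ of $\mathcal{S}'$ equals $A'$ by first proving a rigidity statement: the strength-2 interior glues, being unique per type, force every interior-bonded component to be a single partial macrotile occupying one cell of a common $2\times 2$ grid, and a terminal $B'$ must have all of its blocks complete (an incomplete block exposes an interior glue, to which the missing subtile attaches at strength~2). The delicate part is controlling inter-block bonds: two blocks can only bind with total strength $\ge 2$ through strength-1 boundary glues, and I must exclude (i) \emph{misaligned} blocks, whose single overlapping subtile pair would project to a mismatch in $A$, and (ii) \emph{cooperative} attachments binding one block to two different neighbors by one glue each, which would give that block two independent connections and hence a cycle in the bond graph of $A$. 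Both are impossible since $A$ is mismatch-free with $G(A)$ a tree, so the only admissible inter-block bond is the grid-aligned, full-side bond, which aligns the grids of its two blocks and occurs exactly when the underlying tiles are adjacent in $\mathcal{S}$; propagating this alignment through the connected $B'$ places all blocks on a single grid.

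Finally I would close the reduction to $\mathcal{S}$. Under the rigidity statement, $B'$ projects to a connected, mismatch-free assembly $B$ of tiles of $T$ whose adjacencies mirror $\mathcal{S}$; since $B$ is $1$-stable it is producible by Lemma~\ref{lem:1-stable-is-producible}. Any tile attaching to $B$ in $\mathcal{S}$ would lift, via its completed block and a full shared side of two matching strength-1 glues, to a producible assembly attaching to $B'$ at strength~$2$; because $B'$ is terminal, no such tile exists, so $B$ is a terminal assembly of $\mathcal{S}$ and therefore $B=A$, giving $B'=A'$. Thus $A'$ is the unique (mismatch-free, finite) terminal assembly of $\mathcal{S}'$, completing the lemma.
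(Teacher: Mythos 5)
There is a genuine gap, and it sits exactly where you predicted trouble: uniqueness. The defect is in the construction itself. You place the \emph{same} strength-1 glue $g_n(t)$ on the north sides of both $t^{\mathrm{NW}}$ and $t^{\mathrm{NE}}$ (and similarly on every macroside). This permits bonds between \emph{misaligned} blocks whose glues legitimately match, and neither of your exclusion arguments handles them. Your case (i) reasons incorrectly: a shifted contact between two blocks is not a ``mismatch projected to $A$'' --- the two glues really are equal; it is excluded only because a single contact has strength $1 < 2$. Your case (ii) then misses the real danger: an attachment making \emph{two} strength-1 contacts across two \emph{different, mutually shifted} interfaces. That attachment has total strength 2, and your appeal to ``a cycle in the bond graph of $A$'' cannot rule it out, because the cycle argument presupposes a well-defined projection of blocks onto grid cells --- precisely what misalignment destroys.

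Concretely, let $T = \{u_1, u_2, v\}$ with $u_1 = (g, a, \varnothing, \varnothing)$, $u_2 = (g, \varnothing, \varnothing, a)$, $v = (\varnothing, \varnothing, g, \varnothing)$, all glues of strength 1. The UMFTA $A$ is the $2 \times 2$ square with $v$ above $u_1$ and $v$ above $u_2$, and $G(A)$ is a path, hence a tree. In your scaled system, $B_{u_1}$ and $B_{u_2}$ join aligned via two $a$-bonds, and then a completed block $B_v$ can attach shifted one column east, straddling both $u$-blocks: its SW subtile's south glue $g$ bonds to $B_{u_1}$'s NE subtile, and its SE subtile's south glue $g$ bonds to $B_{u_2}$'s NW subtile --- two strength-1 bonds, a legal $\tau = 2$ attachment. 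The resulting 12-tile assembly $M$ lies on no common grid and is \emph{terminal}: its only exposed non-null glues are two north-facing copies of $g$ three columns apart, while no producible assembly of $\mathcal{S}'$ exposes two south-facing copies of $g$ at spacing 3 (south-facing $g$ occurs only on $v$-blocks, at spacing 1). So $\mathcal{S}'$ has two terminal assemblies, $A'$ and $M$, and the lemma's conclusion fails. This is exactly what the paper's construction is engineered to prevent: there, each original glue-side pair is replaced by \emph{two distinct} new glue types (one strength-2, one strength-1), one for each of the two edge positions on a macroside, so every positive-strength bond pins down the relative grid alignment, and the paper's ``12 edges of the grid'' argument becomes sound. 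Your construction can be repaired the same way --- tag the two copies of each side glue by position (e.g.\ $g_n^{\mathrm{W}}$ on $t^{\mathrm{NW}}$ and $g_n^{\mathrm{E}}$ on $t^{\mathrm{NE}}$, distinct glues bonding only to their own kind) --- after which your block-completeness, projection, and terminality arguments do go through. Note, however, that the paper's mixture of strength-1 and strength-2 \emph{internal} glues is not incidental either: it forces macrotiles to assemble in counterclockwise boundary order, which the proof of Theorem~\ref{thm:mainresult} later relies on, so a self-assembling-block version of Lemma~\ref{lem:scaling}, even once fixed, would not plug into that proof unchanged.
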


\begin{proof}
We start by describing common properties of all occurrences of each tile type $t \in T$.
Since $G(A)$ is a tree, Lemmas~\ref{lem:at-least-two-1-occurrence-tile-types} and~\ref{lem:1occurrence-tiles-1stable} imply that there exists an edge $e'$ in $G(A)$ between two 1-occurrence tiles and Lemma~\ref{lem:two-occurrences-same-glue-side} implies that any path between two occurrences of $t$ use the same glue-side pair.
So any breadth-first search $G(A)$ starting at a 1-occurrence tile incident to $e'$ visits all occurrences of $t$ exactly once, and all via incoming edges from the same side of $t$.
Then since $A$ is mismatch-free, if a direction is applied to each edge of $G(A)$ according to the direction of traversal during the breadth-first search, all occurrences of $t$ have the same set of incoming and outgoing edges.
So all occurrences of $t$ have their corners visited in the same order during a traversal of the boundary of $A$.

We use these conditions to construct unique macrotile versions of each tile type according to their incoming and outgoing edges induced by the breadth-first search starting at $e'$. 
All possible macrotile constructions (up to symmetry) are shown in Figure~\ref{fig:macrotile}.
For each glue-side pair in the original system, we use two glue-side pairs in the scaled system, one with strength-2 and the other with strength-1.
The glue-side pair visited first in the counterclockwise traversal of the boundary starting at $e'$ has strength~2, while the other pair has strength~1.
 
\begin{figure}[ht]
\centering
\includegraphics[scale=1.0]{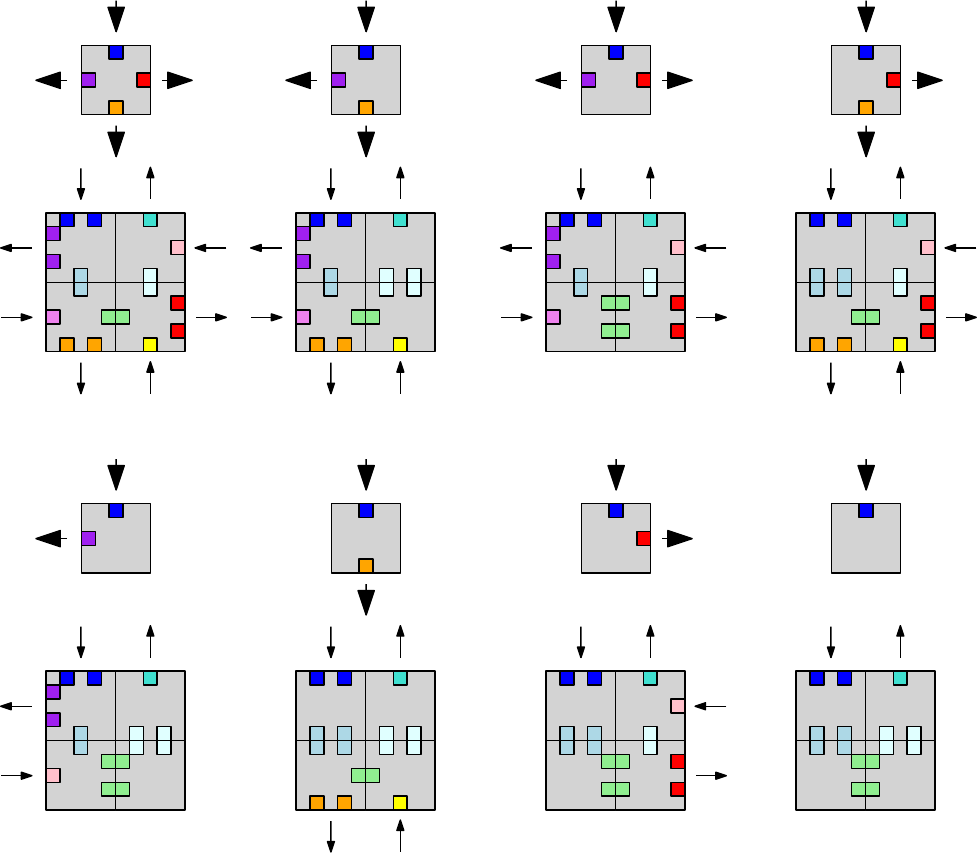}
\caption{The individual tiles enumerate (up to symmetry) all combinations of incoming and outgoing edges (large arrows) induced by a breath-first search of $G(A)$.
The corresponding $2 \times 2$ macrotiles are used in the proof of Theorem~\ref{thm:mainresult} to construct a temperature-2 system that carries out the assembly of $A$ at scale~2 in the order that the tiles appear along the boundary (small arrows).
All internal macrotile glues are unique to the tile type, while all external macrotile glues correspond to the glues found on the surface of the inducing tile.}
\label{fig:macrotile}
\end{figure}

There are also three glues internal to each macrotile attaching each pair of adjacent tiles forming a macroedge whose corresponding edge of the tile either has an outgoing edge induced by the breadth-first search, or has no edge.
These glues are unique to the macrotile type.
The strengths of each of these glues is determined by whether the closest macroside has glues.
If not, then the glue is strength-2, otherwise the glue is strength-1.

If the closest macroside does not contain a glue, then the strength-2 internal glue is necessary to allow assembly to continue along the boundary of the assembly in the counterclockwise direction (e.g.\ from the northwest to the southwest tile).
If the closest macroside does contain a glue, then the strength-1 internal glue prevents the placement of the next tile in the macrotile (e.g.\ the southwest tile after the northwest tile) until a second tile from an adjacent macrotile (e.g.\ the southeast tile of the macrotile to the west) has been placed.
As a result, no pair of tiles in a macrotile can be present in a common assembly unless all tiles between them along a counterclockwise traversal of the boundary of the macrotile assembly are also present.

\textsc{Scaled assembly of $A$.} We claim that this scaled version of the system has a unique terminal assembly $A'$ obtained by replacing each tile in the original unique terminal assembly with the corresponding $2 \times 2$ macrotile.
First we prove that any subassembly of $A'$ corresponding to a subtree of $G(A)$ is producible.
A subtree of size~1 corresponds to a leaf node, the lower-rightmost case in Figure~\ref{fig:macrotile}, and is clearly producible.
For larger subtrees, the assembly can be formed by combining the~4 tiles of the root macrotile to the (up to) three subtrees assemblies.
Grow the assembly in counterclockwise order around the boundary, attaching either a subtree assembly (if the macroside has a glue) or the next tile of the root macrotile.
In both cases, placing the second root tile along the macroedge is possible, as either the internal glue shared with the previous root tile is strength-2 or a second glue is provided by the subtree assembly.
Then by induction, the assembly $A'$ corresponding to the subtree rooted at the root of the breadth-first search is producible.

By construction, $A'$ is terminal because it corresponds to a mismatch-free terminal assembly in the original system that necessarily has no exposed glues.
So $A'$ is a terminal assembly of the scaled system.
Next, we prove that $A'$ is the unique terminal assembly of the system.

\textsc{Terminal assembly uniqueness.} We start by proving that every producible assembly can positioned on a $2 \times 2$ macrotile \emph{grid}, where every tile in the southwest corner is a southwest tile of some macrotile, every tile in the northwest corner is the northwest tile of some macrotile, etc.
Start by noticing that each glue type appears coincident to only one of 12 edges of the grid: the~4 internal edges of each macrotile, and the~8 external edges.
Suppose there is some smallest producible assembly that does not lie on a grid.
Then this assembly must be formed by the attachment of two smaller assemblies that do lie on grids, and whose glues utilized in the attachment are coincident to only one of~12 edges of the grid.
So if these assemblies are translated to have coincident matching glue sides, then their grids must also be aligned and the assembly resulting from their attachment also lies on the grid, a contradiction.

Let $A_p'$ be a producible assembly of the macrotile system that is not $A'$.
Construct an assembly $A_p$ of the original input system $\mathcal{S}$ in the following way: replace each macrotile region with a single tile corresponding to one of the tiles in the macrotile region.
If such a replacement is unambiguous, meaning that all tiles in each macrotile region belong to a common macrotile, then the resulting assembly is a 1-stable (and thus producible) assembly of $\mathcal{S}$.

We also claim that such a replacement is always unambiguous.
Suppose, for the sake of contradiction, that there is some $A_p'$ such that replacement is ambiguous.
The ambiguity must be due to two tiles in the same macrotile region bonded via external strength-2 glues on \emph{different} macrosides to tiles in adjacent macrotiles, since no macroside has two strength-2 glues (see Figure~\ref{fig:macrotile}).
So there is some path in $G(A_p')$ from the external glue of one of of these tiles to the external glue to the other consisting of length-2 and length-3 subpaths through other macrotile regions, each consisting of tiles of a common macrotile.
So this path can be unambiguously replaced with a path from tiles in $\mathcal{S}$ from one side of a tile location to the other side, with some tile of $\mathcal{S}$ able to attach at this location.
But this yields a producible assembly of $\mathcal{S}$ (and thus a subassembly of $A$) with a cycle, a contradiction.
Since constructing $A_p$ from $A_p'$ is always unambiguous, and $A_p$ is a subassembly of $A$, $A_p'$ is a subassembly of $A'$.
Then by Lemma~\ref{lem:subassembly-of-producible-is-not-terminal}, $A_p'$ is not terminal.
\end{proof}

\begin{theorem}
\label{thm:mainresult}
Let $\mathcal{S} = (T, f, 1)$ be a 2HAM system with UMFTA $A$.
Then there exists a factor-2 size-separable 2HAM system $\mathcal{S}' = (T', f', 2)$ with UMFTA $A'$ and $|\mathcal{S}'| \leq 8|\mathcal{S}|$.
Furthermore, $A'$ has the shape of $A$ scaled by a factor of~2.
\end{theorem}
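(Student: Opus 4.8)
The plan is to build $\mathcal{S}'$ through a four-stage pipeline, each stage invoking a lemma already proved, and then to install a ``weak-glue cut'' that forces the last assembly step to join two equal halves. First I would apply the Tree-ification Lemma (Lemma~\ref{lem:treeification}) to replace $\mathcal{S}$ by a system whose UMFTA has the same shape but a tree bond graph, at no cost in tile count. By Lemmas~\ref{lem:at-least-two-1-occurrence-tile-types} and~\ref{lem:1occurrence-tiles-1stable} this tree assembly contains an edge $e'$ between two 1-occurrence tiles, which I fix as one end of the intended cut. Next I would apply Lemma~\ref{lem:1-occurrence-path-wherever} to a carefully chosen target edge $e$, converting the entire tree-path from $e'$ to $e$ into 1-occurrence tiles (doubling the tile count). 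Finally I would apply the scaling construction of Lemma~\ref{lem:scaling}, producing a temperature-2 system whose UMFTA $A'$ has the shape of $A$ scaled by a factor of~2 and quadruples the tile count, so that $|\mathcal{S}'| \le 4\cdot 2\cdot|\mathcal{S}| = 8|\mathcal{S}|$; the glue-weakening of the last stage adds no tiles.

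The geometric heart of the argument is that, because $A$ is a tree, the fattened shape $A'$ exposes every tile on its outer boundary (concave corners of the scaled tree reach even the tiles of interior macrotiles), so the assembly-ordering bonds of the macrotile construction trace a single closed tour visiting all $4|A|$ tiles. I would cut this tour at two \emph{antipodal} positions---the position of $e'$ and the position $2|A|$ tiles away---by weakening the backup bonds there so that each cut is spanned by a single strength-1 bond. Because the tiles along the path from $e'$ to $e$ were made 1-occurrence and then scaled into unique macrotiles, the tiles flanking the two cuts are 1-occurrence in $A'$, so by Lemma~\ref{lem:1-occurrence-glues-unique} the weakening can be localized without disturbing the rest of the assembly. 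The crucial point is that factor-2 size-separability demands an \emph{exact} bisection, which no single tree edge need provide (e.g.\ a star, cf.\ Figure~\ref{fig:simple-examples}); the factor-2 scaling is exactly what lets me place a cut in the middle of a macrotile and split the $4|A|$-tour into two arcs of precisely $2|A|$ tiles each.

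With the cut in place, size-separability would follow from three observations. Each half is an arc of the tour whose internal strength-1 bonds are still backed up by the macrotile chords, so it is 2-stable and producible by the growth argument of Lemma~\ref{lem:scaling}. A half presents a cut glue only at the end of its arc meeting a cut, so a complete half presents both cut glues (total strength~$2$) while any incomplete arc presents at most one (total strength~$\le 1 < 2$); hence only two complete halves can attach across the cut, and they do so to form $A'$. Every producible assembly is therefore a sub-arc of a half (size $\le 2|A|$), a complete half (size $2|A|$), or $A'$ (size $4|A|$); since a complete half is a proper subassembly of the producible $A'$, Lemma~\ref{lem:subassembly-of-producible-is-not-terminal} shows it is not terminal, leaving $A'$ as the unique terminal assembly and giving ratio exactly $4|A|/2|A| = 2$.

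The main obstacle I expect is making the weak-glue cut simultaneously clean and exact. I must verify that deleting the backup bonds at the two antipodal positions really yields a two-edge cut---that no macrotile chord secretly bridges the cut---and that the two resulting halves are \emph{exactly} equal, which forces the cut to land at a prescribed tile position rather than at a convenient inter-macrotile seam. Alongside this I must confirm that only reducing bond strengths cannot introduce new producible assemblies, infinite growth, or mismatches, and that localizing every modification to the 1-occurrence flanking tiles keeps $A'$ mismatch-free and uniquely terminal. The remaining steps---the tile-count bookkeeping and the producibility of each half---are routine once the cut is correctly positioned.
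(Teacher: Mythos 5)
Your pipeline is the paper's pipeline: tree-ify via Lemma~\ref{lem:treeification}, obtain $e'$ from Lemmas~\ref{lem:at-least-two-1-occurrence-tile-types} and~\ref{lem:1occurrence-tiles-1stable}, make the path to an antipodal edge 1-occurrence via Lemma~\ref{lem:1-occurrence-path-wherever}, scale via Lemma~\ref{lem:scaling}, and weaken glues at the cut, with the same $1 \cdot 2 \cdot 4 = 8$ tile-count bookkeeping. But there is a genuine gap exactly where you locate your ``main obstacle,'' and it is not a verification that can succeed as stated --- it is a missing construction step. The cut separating the two halves of $A'$ runs through the \emph{centers} of every macrotile on the tree path from $e'$ to the second cut position, and each such macrotile carries internal chord glues that cross this center line. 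Weakening bonds only at the two endpoints therefore does not yield a two-edge cut: the chords of all intermediate path macrotiles still bridge it, so producible assemblies spanning both ``halves'' without either half being complete (even $A'$ minus a single tile) remain producible, and factor-2 separability fails. The paper's proof devotes its central modification to precisely this: for each pair of adjacent 1-occurrence macrotiles along the path, the second (strength-1) external glue of their shared macroedge is raised to strength~2 and the internal glue adjacent to that macroside is \emph{deleted}, so that the center line of the entire path becomes glue-free while each half stays 2-stable (the strengthened external glue replaces the deleted chord, and the BFS-incoming side of each macrotile never had a chord to begin with). This is also the true reason the \emph{whole} path from $e'$ to $e$ must be made 1-occurrence: the glue surgery must be localized to tiles occurring nowhere else in $A$, not merely so that the tiles flanking the two cuts are unique, which is the only use you assign to Lemma~\ref{lem:1-occurrence-path-wherever}.

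One point where your proposal is arguably more careful than the paper: you insist the second cut land exactly $2|A|$ tiles along the tour, even if that position falls \emph{inside} a macrotile (at an internal bond) rather than at a macroedge, whereas the paper places it at an edge $e$ of $G(A)$ reached after half the boundary traversal. Your placement is what actually guarantees the two arcs have exactly $2|A|$ tiles each (consider an L-shaped tree of three tiles: no pair of macroedge crossings bisects the 12-tile tour, but an internal bond of the middle macrotile does), and it is geometrically consistent, since consecutive tour tiles within one macrotile lie along a macroside with no neighbor, so the cut can exit the assembly there. But this refinement does not rescue the proposal: with or without it, the construction stands or falls on the chord-elimination surgery along the path, which your proof never performs.
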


\begin{proof}
We modify the construction used in the proof of Lemma~\ref{lem:scaling} in two ways.
First, we apply Lemma~\ref{lem:1-occurrence-path-wherever} to create a path of 1-occurrence tiles from $e'$, the edge between a pair of 1-occurrence tiles utilized in the proof of Lemma~\ref{lem:scaling}, to the edge $e$ reached after half of a complete counterclockwise traversal of the boundary of $A$ starting at $e'$.
This is done to the original system $\mathcal{S}$, before the macrotile conversion is performed.

Next, we modify the macrotiles corresponding to 1-occurrence tiles in the unique terminal assembly $A$ of this modified system $\mathcal{S}$.
By Lemma~\ref{lem:1-occurrence-glues-unique}, the external glues used between every pair of 1-occurrence macrotiles, particularly those along the path of 1-occurrence macrotiles just constructed, appear only once in $A'$.
Recall that Lemma~\ref{lem:scaling} uses two glues to attach each pair of adjacent macrotiles.
The first glue visited during a counterclockwise traversal starting at $e'$ has strength~2, while the second has strength~1. 
We increase the strength of the second glue to 2, and eliminate the internal glue closest to this macroside in the macrotile closer to $e'$ (the macrotile further from $e'$ does not have this glue).
See Figure~\ref{fig:modified-macrotile} for an example.

\begin{figure}[ht]
\centering
\includegraphics[scale=1.0]{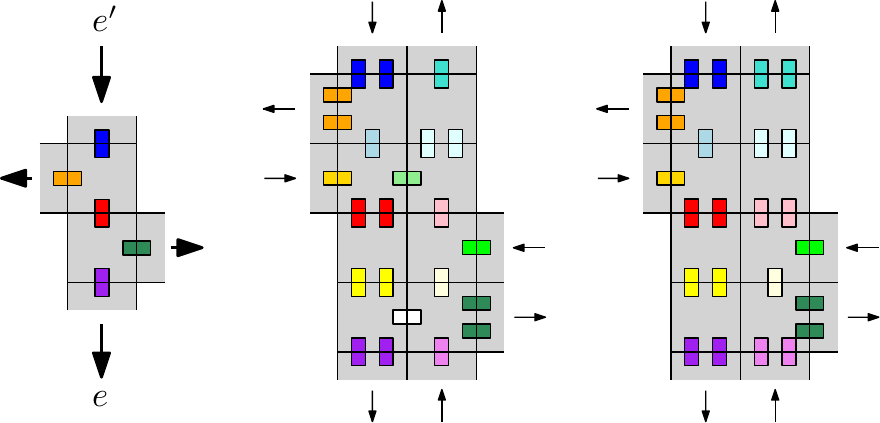}
\caption{Modifying 1-occurrence macrotiles along the path from $e'$ to $e$ to eliminate internal glues.
The original assembly with breadth-first search directions (left) yields the macrotile assembly (center) by Lemma~\ref{lem:scaling}. The modification in the proof of Theorem~\ref{thm:mainresult} (right) removes internal glues by strengthening unique external glues.}
\label{fig:modified-macrotile}
\end{figure}

We perform this modification for all of the (1-occurrence) macrotiles along the path from $e'$ to $e$.
Clearly the same terminal macrotile assembly exists as before, as we have only decreased the constraints for tiles to attach.
Moreover, since we only created pairs of strength-2 glues on macroedges with (macro)glue-pairs that appear exactly once, it is still the case that no pair of tiles can bond to external glues on the same macroside.
So the terminal assembly uniqueness argument in the proof of Lemma~\ref{lem:scaling} still holds.

This modified $A'$ has a path through the center of the macrotiles from $e'$ to $e$ containing no glues.
So $A'$ has a 2-edge cut of $G(A')$ consisting of two (non-macro)edges of $G(A)$ forming half of $e'$ and $e$, respectively. 
These two glues both have strength~2, since they external glues between 1-occurrence tiles just modified to no longer include strength-1 glues.
We reduce the strength these two glues to~1, yielding a 2-edge cut with total strength~2.

\textsc{Size-separability.} We claim that this system is factor-2 size-separable, which is true if the only 2-stable subassemblies containing tiles adjacent to both edges of the 2-edge cut have size $|A'|/2$.
Consider the 2-stability of one of these assemblies, called $A_{\rm half}'$.
Clearly tiles adjacent to the cut are at opposite ends of a counterclockwise traversal of $A_{\rm half}'$, and moreover a counterclockwise traversal (the traversal along the boundary of $A'$) visits all tiles in $A_{\rm half}'$.

Suppose some tile $t$ along the traversal is missing.
We claim that the subassembly $A_{\rm sub}$ consisting of the remainder of the tiles in the macrotile $m$ containing $t$ and any subtrees visited by the traversal before reaching the counterclockwise-most tile of $m$ (northeast in Fig.~\ref{fig:macrotile}) are only attached to the rest of the assembly by a single strength-1 glue.
Observe in Figure~\ref{fig:macrotile} that the northeast tile is attached to the next tile in the traversal (the southeast tile of the north macrotile) by a strength-1 glue.
Moreover, any path in $G(A_{\rm half}')$ from $A_{\rm sub}$ to a location not in $A_{\rm sub}$ goes through $m$.
It can be verified exhaustively that removing any tile of $m$ leaves only this single strength-1 glue connecting $A_{\rm sub}$ to the rest of $A_{\rm half}'$.

So $A_{\rm half}'$ is 2-stable, and any subassembly containing the first and last tiles visited by a counterclockwise traversal of the boundary of $A_{\rm half}'$ missing any tiles visited by the traversal (all of the tiles in $A_{\rm half}'$) is not 2-stable.
So the only 2-stable subassemblies containing tiles adjacent to both edges of the 2-edge cut have size are $A_{\rm half}'$ and the other half of the cut, both with size $|A'|/2$.

The total size of this modified macrotile system is at most $8|\mathcal{S}|$, since invoking Lemma~\ref{lem:1-occurrence-path-wherever} increases the size of the system at most a factor of~2 and so the macrotile system has size at most $4 \cdot 2|\mathcal{S}|$.
\end{proof}

\section{Open Problems}

For temperature-1 systems with mismatch-free unique terminal assemblies, our result is nearly as tight as possible.
Scaling to at least a factor of~2 and using temperature of at least~2 are both necessary, since any temperature-1 system or system with a tree-shaped assembly is at most factor-$(1 + 1/|A|)$ size-separable.
The only remaining opportunity for improvement is to reduce the number of tile types used to less than $8|\mathcal{S}|$.

We contend that our result is a first step in understanding what is possible in size-separable systems, and a large number of open problems remain.
Perhaps the most natural problem is to extend this result to the same set of systems, except permitting mismatches.
We conjecture that a similar result is possible there:

\begin{conjecture}
Let $\mathcal{S} = (T, f, 1)$ be a 2HAM system with unique finite terminal assembly $A$.
Then there exists factor-2 size-separable system $\mathcal{S}' = (T', f', 2)$ with a unique finite terminal assembly $A'$ and $|\mathcal{S}'| = O(\mathcal{S})$. 
Furthermore, $A'$ has the shape $A$ scaled by a factor of $O(1)$. 
\end{conjecture}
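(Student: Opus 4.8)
The plan is to preserve the three-stage architecture of the mismatch-free proof --- structural normalization, constant-factor macrotile scaling, and weakening of a two-edge cut --- but to replace each stage with a variant that tolerates mismatched adjacencies. The relaxations in the conjecture (scale $O(1)$ rather than exactly~$2$, and $|\mathcal{S}'| = O(|\mathcal{S}|)$ rather than $8|\mathcal{S}|$) are exactly the slack I expect to need in order to absorb the loss of the rigidity that mismatch-freeness provides.

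The first and hardest stage is a mismatch-robust analogue of tree-ification (Lemma~\ref{lem:treeification}) together with the $1$-occurrence connectivity lemma (Lemma~\ref{lem:1occurrence-tiles-1stable}). Both rest on the pumping arguments of Section~\ref{sec:tree-ification}, whose engine is the step ``since $A$ has no mismatches, replacing a cycle tile $t_{\rm old}$ by its cycle-successor $t_{\rm new}$ rebonds to both neighbors,'' used to grow and pump a second cycle into an infinite producible assembly (Lemma~\ref{lem:repeated-glue-side-pair-case-2}). With mismatches permitted this rebonding can fail, the pumped cycle need not close, and the infinite-assembly contradiction collapses; consequently cycles in $G(A)$ may be genuinely unbreakable, and worse, deleting a glue to break a cycle can \emph{create} new producible (hence new terminal) assemblies. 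Rather than literally making $G(A)$ a tree, I would therefore scale $A$ by a constant and impose a single canonical boundary traversal at the macro level, so that the \emph{order} in which macrotiles assemble is a path even when $G(A)$ contains cycles; mismatched micro-adjacencies are then realized as double-null macroedges, and the only macro-level structure I control is mismatch-free by construction.

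Given such a normalized, constant-scale macrotile version, stages two and three should follow the templates of Lemma~\ref{lem:scaling} and Theorem~\ref{thm:mainresult}: each glue-side pair is doubled into a strength-$2$/strength-$1$ pair along the traversal order, the grid-alignment argument positions every producible assembly on a macrotile grid, and a cut of unique $1$-occurrence (macro)glues halfway along the boundary traversal is weakened to total strength~$2$ to force factor-$2$ size-separability. The new work here is to recheck both the terminal-uniqueness argument and the consequence of Lemma~\ref{lem:subassembly-of-producible-is-not-terminal} in the presence of mismatched macroedges: I must verify that replacing each macrotile region by a single tile of $\mathcal{S}$ remains unambiguous and never manufactures a spurious cycle or bond, now that adjacent macrotiles may legitimately be mismatched.

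I expect the main obstacle to be the \emph{glue-covering} phenomenon, which has no mismatch-free counterpart: an exposed glue $g$ on one tile may be geometrically covered by a mismatched neighbor held in place by other bonds, so the same glue can both attract a tile in one producible context and be blocked in another. This is precisely the mechanism by which mismatch systems sustain cycles and enforce uniqueness, and it is what makes both the normalization stage and the terminal-uniqueness argument delicate: a constant-scale simulation must reproduce covering faithfully while still exposing a clean two-edge cut and never admitting a stable half-assembly with a missing interior tile. Establishing that such a simulation exists with constant scale and constant tile blowup --- equivalently, that the covering behavior of any finite-UTA temperature-$1$ system can be locally encoded --- is the crux, and is why the statement remains a conjecture rather than a theorem.
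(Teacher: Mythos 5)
There is a genuine gap here, and it is worth being blunt about its nature: the statement you were asked to prove is posed in the paper as an open \emph{conjecture} --- the paper offers no proof of it --- and your proposal does not offer one either. What you have written is a research plan. Its first and, by your own account, hardest stage (a mismatch-robust normalization replacing Lemma~\ref{lem:treeification} and Lemma~\ref{lem:1occurrence-tiles-1stable}) is described only as an intention, and your final paragraph concedes that establishing the needed constant-scale simulation ``is the crux, and is why the statement remains a conjecture rather than a theorem.'' A proof must close that crux; naming it, however accurately, leaves the entire mathematical content of the conjecture unresolved.

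Concretely, the step that fails is the one you lean on to bypass tree-ification: imposing ``a single canonical boundary traversal at the macro level, so that the order in which macrotiles assemble is a path even when $G(A)$ contains cycles.'' This runs directly into the obstruction the paper's introduction isolates: tiles in the interior of a cycle are never visited by a tour around the outside of the assembly, so no boundary-level attachment order can certify that the interior is complete before the final, size-separating attachment occurs --- exactly the scenario factor-2 size-separability must exclude. This is precisely why the paper insists on making $G(A)$ a tree \emph{before} scaling (Lemma~\ref{lem:treeification}), so that the counterclockwise traversal in Lemma~\ref{lem:scaling} visits every tile; declaring mismatched adjacencies to be ``double-null macroedges'' does not restore this property when genuine cycles persist. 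Your diagnosis of the other failure points is sound --- the pumping engine of Lemma~\ref{lem:repeated-glue-side-pair-case-2} breaks without the rebonding step, glue deletion can create new terminal assemblies, and the grid-alignment and no-spurious-cycle arguments in Lemma~\ref{lem:scaling} both invoke mismatch-freeness --- and your identification of glue covering as the central phenomenon parallels the paper's observation that mismatch-permitted temperature-1 questions (computational universality, the $2n-1$ square bound) remain open where their mismatch-free analogues are settled. But a correct catalogue of obstacles, plus the admission that you cannot yet overcome them, is a statement of the problem, not a solution to it.
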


Extending the result to mismatch-free systems at higher temperatures also is of interest because these systems are generally capable of much more efficient assembly.
Soloveichik and Winfree~\cite{Soloveichik-2007a} prove that one can construct a temperature-2 system that uses an optimal number of tiles (within a constant factor) to construct any shape, provided one is allowed to scale the shape by an arbitrary amount, and it is likely their construction can be modified to be factor-2 size-separable.
However, it remains open to achieve high-factor size-separable systems at temperature~2 using only a small scale factor. 

\begin{conjecture}
Let $\mathcal{S} = (T, f, 1)$ be a 2HAM system with UMFTA $A$.
Then there exists factor-2 size-separable system $\mathcal{S}' = (T', f', 2)$ with a unique terminal assembly $A'$ and $|\mathcal{S}'| = O(\mathcal{S})$. 
Furthermore, $A'$ has the shape $A$ scaled by a factor of $O(1)$. 
\end{conjecture}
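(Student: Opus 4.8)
The plan is to assemble the theorem from the structural lemmas already established, modifying the scaled construction of Lemma~\ref{lem:scaling} so that the final assembly step joins two completed halves of $A'$ across a weak cut. First I would apply the Tree-ification Lemma (Lemma~\ref{lem:treeification}) to reduce to a system whose bond graph $G(A)$ is a tree, at no cost in tile count. By Lemmas~\ref{lem:at-least-two-1-occurrence-tile-types} and~\ref{lem:1occurrence-tiles-1stable}, the 1-occurrence tiles form a 1-stable subassembly containing at least two tiles, so there is an edge $e'$ of $G(A)$ between a pair of 1-occurrence tiles. I then fix a counterclockwise traversal of the boundary of $A$ starting at $e'$ and let $e$ be the edge reached after exactly half of the traversal; this is the edge across which I intend to place the second half of the eventual cut, so that the two resulting pieces have equal size.

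The next step is to manufacture a ``cut line'' of uniquely-glued tiles. I would invoke Lemma~\ref{lem:1-occurrence-path-wherever} with these $e'$ and $e$ to obtain a system, still with a tree-shaped unique terminal assembly of the same shape, in which the entire path from $e'$ to $e$ consists of 1-occurrence tiles carrying unique glues; this costs at most a factor of~$2$ in the number of tile types. Applying the scale-$2$, temperature-$2$ macrotile construction of Lemma~\ref{lem:scaling} to this system then yields a system of size at most $4 \cdot 2|\mathcal{S}| = 8|\mathcal{S}|$ whose unique terminal assembly $A'$ has the shape of $A$ scaled by a factor of~$2$.

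It remains to install the weak cut and preserve uniqueness. Along the path of 1-occurrence macrotiles from $e'$ to $e$, the external macroglues appear exactly once by Lemma~\ref{lem:1-occurrence-glues-unique}, so I can safely strengthen the strength-$1$ member of each shared macroedge to strength~$2$ and delete the internal macrotile glue nearest that macroside, producing a glue-free channel through the centers of these macrotiles. Since this only strengthens already-unique external glues and relaxes internal constraints, the terminal-assembly-uniqueness argument of Lemma~\ref{lem:scaling} survives verbatim, and the modified assembly now carries a $2$-edge cut of $G(A')$ --- the halves of $e'$ and of $e$ --- which I reduce to total strength~$2$ by giving each of these two edges strength~$1$.

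For size-separability I would show that the largest non-terminal producible assemblies are exactly the two halves, each of size $|A'|/2$, while the terminal assembly $A'$ has size $|A'|$. The crux is that any $2$-stable subassembly containing tiles adjacent to both cut edges must be a complete half: the tiles incident to the two cut edges sit at opposite ends of a counterclockwise boundary traversal that visits every tile of a half, and if any traversed tile were missing then the piece of the assembly lying beyond it would hang off by only a single strength-$1$ bond, violating $2$-stability. This in turn rests on inspecting the macrotiles of Figure~\ref{fig:macrotile}: the forward (counterclockwise) external glue between consecutive macrotiles is always strength~$1$, and removing any single tile of a macrotile severs its strength-$2$ internal connections, leaving exactly that lone strength-$1$ bond across the boundary. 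I expect this exhaustive tile-by-tile verification --- that every single missing tile along the traversal reduces the connection to the rest of the half to a single strength-$1$ bond --- to be the main obstacle, since everything else is bookkeeping over the previously established lemmas.
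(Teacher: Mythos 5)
Your proposal is correct for the statement as written and follows the paper's own argument (the proof of Theorem~\ref{thm:mainresult}) essentially step for step: tree-ification, the 1-occurrence path from $e'$ to the halfway edge $e$ via Lemma~\ref{lem:1-occurrence-path-wherever}, the scale-2 temperature-2 macrotile construction of Lemma~\ref{lem:scaling}, and the weakened 2-edge cut with the same exhaustive 2-stability argument, giving the same $8|\mathcal{S}|$ bound. Note only that the statement with its temperature-1 premise is literally a weakening of Theorem~\ref{thm:mainresult}; the surrounding text shows the conjecture was intended for higher-temperature (mismatch-free) input systems, a case that your argument, like the paper's, does not address.
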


In the interest of applying size-separability to system in the staged model of tile self-assembly, we pose the problem of developing size-separable systems with multiple terminal assemblies.
Of course, one can construct systems where the smallest terminal assembly is less than half the size of the largest terminal assembly, ensuring that the system cannot even be factor-1 size-separable.
But given a system whose ratio of smallest to largest terminal assembly is $c$, is a size-separable system with optimal factor~$\frac{2}{c}$ always possible?

\begin{conjecture}
Let $\mathcal{S} = (T, f, 1)$ be a 2HAM system with finite terminal assemblies $A_1, A_2, \dots, A_k$ with $A_1$ and $A_k$ the smallest and largest terminal assemblies.
Then there exists factor-$|A_k|/|A_1|$ size-separable system $\mathcal{S}' = (T', f', 2)$ with $|\mathcal{S}'| = O(\mathcal{S})$ and mismatch-free terminal assemblies $A_1', A_2', \dots, A_k'$ where $A_i'$ has the shape of $A_i$ scaled by a factor of $O(1)$.
\end{conjecture}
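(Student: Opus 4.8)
The plan is to generalize the three-stage pipeline behind Theorem~\ref{thm:mainresult}---tree-ification, factor-$2$ macrotile scaling, and a weak-glue edge cut routed through a path of unique tiles---so that it acts on all $k$ terminal assemblies of $\mathcal{S}$ simultaneously. The target is to equip each terminal $A_i'$ with a $2$-edge cut of total strength~$2$ that partitions it into two $2$-stable halves of size $|A_i'|/2$, so that the only large non-terminal producible assemblies are these halves. Using a common $O(1)$ scale factor $s$, the smallest terminal has size $s^2|A_1|$ and the largest non-terminal has size at most $\max_i |A_i'|/2 = s^2|A_k|/2$, yielding size-separability factor $\tfrac{2|A_1|}{|A_k|}$, i.e.\ factor $\tfrac{2}{c}$ with $c=|A_k|/|A_1|$ the ratio appearing in the statement. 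This is best possible up to the factor~$2$: Lemma~\ref{lem:subassembly-of-producible-is-not-terminal} forces a non-terminal producible of size at least $|A_k'|/2$, so no $O(1)$-scaled system achieves a larger factor.

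First I would lift the tree-ification machinery of Section~\ref{sec:tree-ification} to the multi-terminal setting. The cycle-breaking arguments (Lemmas~\ref{lem:no-repeated-glue-in-cycle} and~\ref{lem:all-occurrences-on-off-cycle}) are stated for a unique terminal assembly; I would reprove them over the union of bond graphs $\bigcup_i G(A_i)$, showing that deleting a glue-side pair lying on a cycle in some $A_i$ neither disconnects nor introduces a cycle in any $A_j$ and creates no new terminal assembly. Next I would extend the $1$-occurrence results (Lemmas~\ref{lem:at-least-two-1-occurrence-tile-types}--\ref{lem:1-occurrence-path-wherever}) to tile types appearing exactly once \emph{across all $k$ terminals}; such globally-unique tiles should still form a connected subassembly in each terminal, letting Lemma~\ref{lem:1-occurrence-path-wherever} reroute a path of globally-unique tiles to the midpoint edge $e_i$ of each terminal's boundary traversal. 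The final steps reuse Lemma~\ref{lem:scaling} with a single macrotile set shared by all terminals, and then weaken the two cut glues of each terminal to strength~$1$ exactly as in the proof of Theorem~\ref{thm:mainresult}.

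The size-separability analysis then mirrors the single-terminal case: a counterclockwise boundary traversal of either half $A_{i,\mathrm{half}}'$ reaches every tile, and removing any tile leaves a subassembly attached by a single strength-$1$ glue, so no proper sub-half is $2$-stable; hence the only $2$-stable assemblies meeting both cut edges of $A_i'$ are its two full halves. I would then bound the largest non-terminal across the \emph{entire} system, which requires ruling out large producible assemblies that are shared between distinct terminals or that straddle one terminal's cut using glues reused by another.

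\textbf{Main obstacle.} The hard part is precisely this cross-terminal interference. Every tool in Sections~\ref{sec:tree-ification}--\ref{sec:macrotile-construction} exploits uniqueness of the terminal assembly: removing or weakening a glue is safe because the target is the only thing being built. With $k$ terminals the same glue can be a cut edge in one terminal and an internal, load-bearing edge in another, so weakening it to split $A_i'$ may disconnect $A_j'$ or produce a spurious large producible fragment common to several terminals that defeats the $\max_i|A_i'|/2$ bound. Controlling this---most plausibly by forcing all cut and path glues to be \emph{globally} unique while keeping the total tile blow-up at $O(1)$ rather than $O(k)$ (naively invoking Lemma~\ref{lem:1-occurrence-path-wherever} once per terminal risks a multiplicative blow-up)---is the crux on which both the correctness of the construction and the $|\mathcal{S}'| = O(|\mathcal{S}|)$ bound depend.
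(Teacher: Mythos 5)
This statement is one of the paper's open conjectures: the paper contains no proof of it, so there is nothing to compare your attempt against, and the only question is whether your proposal settles the problem on its own. It does not. What you have written is a plan whose decisive step is explicitly left open: your ``Main obstacle'' paragraph concedes that you have no mechanism for preventing a glue from being a cut edge in one terminal and a load-bearing internal edge in another, nor for making all cut and path glues globally unique while keeping $|\mathcal{S}'| = O(|\mathcal{S}|)$ rather than $O(k|\mathcal{S}|)$ --- and that is precisely the difficulty that makes this a conjecture rather than a corollary of Theorem~\ref{thm:mainresult}. Moreover, the gap begins earlier in the pipeline than you acknowledge. The single-terminal construction is anchored on an edge $e'$ between two 1-occurrence tiles, whose existence is guaranteed by Lemmas~\ref{lem:at-least-two-1-occurrence-tile-types} and~\ref{lem:1occurrence-tiles-1stable}; your proposed replacement --- tiles occurring exactly once \emph{across all $k$ terminals} --- need not exist in every terminal (a terminal all of whose tile types recur in other terminals, or recur within it, has no such tile), in which case there is nothing to anchor that terminal's cut to and Lemma~\ref{lem:1-occurrence-path-wherever} has no starting edge. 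Similarly, every lemma in Sections~\ref{sec:tree-ification} and~\ref{sec:1-occurrence} derives its contradiction from \emph{uniqueness} of the terminal assembly (a producible assembly that differs from $A$, or is too large, immediately contradicts UMFTA); with $k$ terminals a differing producible may simply be a subassembly of another $A_j$, so ``reproving the lemmas over $\bigcup_i G(A_i)$'' is a substantial missing argument, not a routine adaptation.

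Separately, note that what you aim to prove is not the statement as written. You target factor $2|A_1|/|A_k|$, silently reading the conjecture's ``factor-$|A_k|/|A_1|$'' as the prose factor $2/c$. The distinction is substantive: since $A_k'$ is producible, it is assembled from two producibles, one of which has size at least $|A_k'|/2$ and is non-terminal by Lemma~\ref{lem:subassembly-of-producible-is-not-terminal}, so every system is capped at factor $2|A_1'|/|A_k'|$. If the scale factors are bounded by an absolute constant $C$, this cap is below the literal factor $|A_k|/|A_1|$ whenever $|A_k|/|A_1| > \sqrt{2}\,C$; in other words, the conjecture as literally stated is unattainable for systems with sufficiently disparate terminal sizes, and any completed version of your plan would establish the corrected (prose) version instead. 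You should state this correction explicitly rather than eliding it, since it changes what theorem the construction could possibly prove.
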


We close by conjecturing that not every system can be made size-separable by paying only a constant factor in scale and tile types.
We ask for an example of such a system:

\begin{conjecture}
There exists a 2HAM system $\mathcal{S} = (T, f, \tau)$ with a unique finite terminal assembly $A$ such that any factor-2 size-separable system $\mathcal{S}' = (T', f', \tau')$ with unique finite terminal assembly $A'$ with the shape of $A$ either has $|\mathcal{S}'| \geq 100|\mathcal{S}|$ or the scale of $A'$ is at least 100. 
\end{conjecture}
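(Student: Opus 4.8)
The plan is to exhibit a single system $\mathcal{S}$ whose terminal assembly $A$ is \emph{cheap to produce} but \emph{expensive to separate}, and to derive the stated dichotomy between tile blow-up and scale blow-up from the rigid structure that factor-2 size-separability forces on any candidate $\mathcal{S}'$. For the hard instance I would take $A$ to be a shape that encodes the output of a fixed, small computation --- for example, a rectangle whose boundary records the bits of the output of a fixed Turing machine, assembled in the style of Soloveichik and Winfree~\cite{Soloveichik-2007a} so that $|\mathcal{S}| = O(1)$ while the shape itself has high descriptive complexity. The two features I would demand of $A$ are: (i) large \emph{bisection width}, so that every partition of $A$ into two equal-area connected pieces cuts a number of cells growing with $|A|$; and (ii) high \emph{local} complexity along every such cut, so that the shape near any balanced cut cannot be described succinctly.

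Next I would prove a structural lemma about factor-2 size-separable systems. In such a system the unique terminal assembly $A'$ must be the join of two producible halves $H_1$, $H_2$ of size exactly $|A'|/2$, and no producible non-terminal assembly may exceed $|A'|/2$. I would argue that this forces the final merge to be \emph{all-or-nothing}: if some proper subassembly $H_2'$ of $H_2$ already exposed the full cut interface, then $H_1 \cup H_2'$ would be a producible non-terminal assembly larger than $|A'|/2$ (here Lemma~\ref{lem:subassembly-of-producible-is-not-terminal} is the relevant tool), contradicting factor-2 separability. Hence the cut-interface tiles of each half can appear together only in the \emph{complete} half; completing the interface must require completing essentially all of the half.

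From this structural constraint I would extract the lower bound. The cut-interface tiles, together with the sequencing that makes them complete last, must carry enough information to reconstruct the shape of $A'$ in the neighborhood of the cut. Because $A$ has high bisection width, this neighborhood is large for \emph{every} balanced cut the adversary might choose; because it has high local complexity, its description cannot be compressed. A counting argument would then bound the amount of shape information a system with $t$ tile types at scale $s$ can realize near the cut by a function of $t$ and $s$, forcing $|\mathcal{S}'|$ to be large whenever the scale stays below $100$, and in particular $|\mathcal{S}'| \ge 100|\mathcal{S}|$. The appearance of scale as the alternative is explained by a matching construction: at large scale one can embed a fresh copy of the original computation inside each half, using a constant number of reusable computation-tile types that grow the half from the interior outward so that the interface is genuinely completed last --- so small tile count is achievable precisely when scale is permitted to be large.

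The main obstacle I expect is the lower-bound direction, and specifically ruling out that $\mathcal{S}'$ \emph{recomputes} the cut information locally using only a few tile types. The all-or-nothing property says the interface cannot be \emph{completed} early, but a clever $\mathcal{S}'$ might still assemble the interface tiles through a compact embedded computation rather than hard-coding them --- which is exactly what the scale alternative exploits. Quantifying how much scale such recomputation requires, against an adversary free to choose the temperature $\tau'$, the location of the balanced cut, and the scale factor, is the delicate step. Turning the suggestive structural rigidity of the all-or-nothing merge into a clean counting bound valid against \emph{all} such $\mathcal{S}'$ is what keeps the statement a conjecture, and I would regard a fully rigorous treatment of this step as the heart of any proof.
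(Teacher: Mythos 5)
This statement is one of the paper's closing conjectures: the paper offers no proof of it, so there is no reference argument to compare yours against --- your proposal stands or falls on its own. As written, it falls short of a proof, and you say as much yourself: the entire lower-bound direction (ruling out that $\mathcal{S}'$ recomputes the cut region using few tile types at small scale) is deferred, and that is not a technical detail but the whole content of the conjecture. Your structural observations are sound --- in a factor-2 size-separable system with $|A'| \geq 2$ the final step must merge two producible assemblies of size exactly $|A'|/2$, and Lemma~\ref{lem:subassembly-of-producible-is-not-terminal} does yield the all-or-nothing property of the interface --- but these are the easy part, and they hold for every candidate $\mathcal{S}'$ without bringing you closer to the dichotomy.

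There is also a concrete internal inconsistency in the plan. You want the hard instance to satisfy $|\mathcal{S}| = O(1)$ (so that $100|\mathcal{S}|$ is a low threshold) and simultaneously to have ``high local complexity'' near every balanced cut, so that the cut neighborhood ``cannot be compressed.'' These demands are incompatible: a shape whose unique terminal assembly is produced by a constant-size tile system has description complexity $O(\log |A|)$ --- the system itself plus a size parameter is a complete description --- so every neighborhood of every cut is compressible in the information-theoretic sense. Consequently no counting argument over descriptions of the cut neighborhood can force $|\mathcal{S}'|$ to be large; constantly many tile types always suffice information-theoretically, and the only possible obstruction is geometric and dynamic, namely whether the recomputation can be carried out at scale below $100$ while preserving the all-or-nothing merge --- which is exactly the escape hatch your own large-scale construction exploits. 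A genuine proof would have to quantify that obstruction against an adversary free to choose $\tau'$, the location of the cut, and the assembly order; your proposal does not, so the statement remains, as the paper leaves it, a conjecture.
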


\section*{Acknowledgments}

We thank the anonymous UCNC reviews for their comments that greatly improved the presentation of the paper.
 
\bibliographystyle{abbrv}
\bibliography{size_separable}

\end{document}